\newtheorem{lemma}{Lemma}
\newtheorem{theorem}{Theorem}
\newcommand{\bra}[1]{\mbox{$\left\langle #1 \right|$}}
\newcommand{\ket}[1]{\mbox{$\left| #1 \right\rangle$}}
\begin{document}
\title{Detection loophole in quantum causality and its countermeasures}
\author{Zhu Cao}
\email{caozhu@ecust.edu.cn}
\address{Key Laboratory of Smart Manufacturing in Energy Chemical Process, Ministry of Education, East China University of Science and Technology, Shanghai 200237, China}

\begin{abstract}
Quantum causality violates classical intuitions of cause and effect and is a unique quantum feature different from other quantum phenomena such as entanglement and quantum nonlocality. 
In order to avoid the detection loophole in quantum causality, we initiate the study of the detection efficiency requirement for observing quantum causality.  We first show that previous classical causal inequalities require detection efficiency at least $95.97\%$ ($89.44\%$) to show violation with quantum (nonsignaling) correlations. Next we derive a classical causal inequality $I_{222}$ and show that it requires lower detection efficiency to be violated, $92.39\%$ for quantum correlations and $81.65\%$ for nonsignaling correlations, hence substantially reducing the requirement on detection. Then we extend this causal inequality to the case of multiple measurement settings and analyze the corresponding detection efficiency. After that, we show that previous quantum causal inequalities require detection efficiency at least $94.29\%$ to violate with nonsignaling correlations. We subsequently derive a  quantum causal bound $J_{222}$ that has a lower detection efficiency requirement of $91.02\%$ for violation with nonsignaling correlations. Our work paves the way towards an experimental demonstration of quantum causality and shows that causal inequalities significantly differ from Bell inequalities in terms of the detection efficiency requirement.
\end{abstract}

\pacs{}
\vspace{2pc}

\maketitle

{\it Introduction.---}The identification of cause and effect is of paramount importance
for scientific discovery. However, traditional statistical methods are unable
to capture a causal relationship. In one aspect, from two correlated 
events alone, it is impossible to distinguish which is the cause and 
which is the effect. For example, for the two events ``rain'' and ``wet,'' one
cannot distinguish the causal relation ``rain causes wet'' from ``wet causes rain'' by statistical evidence alone.
In another aspect and perhaps more seriously, correlated events may have 
no causal relation at all, arising just from a common cause. For example,
 the events ``green'' and ``oxygen'' do not have a causal relationship 
 but are positively correlated because of a common cause ``plant.''
 To deal with the incapability of traditional statistical methods to capture causal
 relations, Pearl and his co-workers have developed a comprehensive
 mathematical framework of causality \cite{pearl2009causality,spirtes2000causation}. The tools of causality have
 since been applied to a wide variety of scientific fields \cite{glymour2001mind,morgan2015counterfactuals,shipley2016cause,peters2017elements}.

In the quantum regime, the theory of causality is dramatically different, which is referred to as \emph{quantum causality}.
One of the two fundamental cornerstones of classical causality, ``local realism'' and ``free will'' 
cannot both hold in the quantum regime due to the violation of Bell inequalities \cite{bell1964einstein}. 
This has spurred a large body of works that examine quantum nonlocality in various causal networks \cite{hall2020measurement,chaves2015unifying,wood2015lesson,wolfe2019inflation,hall2011relaxed,gallicchio2014testing,putz2014arbitrarily,friedman2019relaxed,hall2010local,handsteiner2017cosmic,rauch2018cosmic,abellan2018challenging,fritz2012beyond,chaves2016polynomial,rosset2016nonlinear,renou2019genuine,barrett2021cyclic}.
In a recent breakthrough, it is shown that quantum 
causality is a concept more general than quantum nonlocality \cite{PhysRevLett.125.230401}. Even in
causal setups where no Bell inequalities can be violated \cite{Henson_2014}, 
certain classical causal inequalities 
 can still be violated by quantum correlations \cite{PhysRevLett.125.230401}.
 Therefore, quantum causality illuminates the foundation of quantum theory from a new angle 
 and helps to deepen our understanding of quantum theory.

In quantum nonlocality, it is known that imperfect detection efficiency may destroy quantum nonlocality \cite{eberhard1993background}. Indeed, the \emph{detection loophole} caused by insufficient detection efficiency has been identified as one of the main bottlenecks for experimental observation of quantum nonlocality \cite{liu2018device}. This leads us to a natural question: Does imperfect detection efficiency also prevent the observation of quantum causality? And if so, what is the minimum detection efficiency requirement for observing quantum causality? These questions are very relevant from an experimental point of view, since realistic detectors do not have perfect detection efficiency. In addition, the experimental observation of quantum causality not only validates the theoretical prediction of quantum causality, but also facilitates the practical applications of quantum causality (to be discussed more later).

From a theoretical perspective, the detection efficiency requirement of quantum causality is also important due to two reasons. First, it may give a separation between quantum nonlocality and quantum causality, showing a second difference between these two concepts. Second, for a given bipartite quantum state, the minimum detection efficiency requirement for this state to violate causal inequalities can be used as a measure of its quantum causality, or more generally, its quantumness.

\begin{figure*}[htb]
\centering \includegraphics[width=16cm]{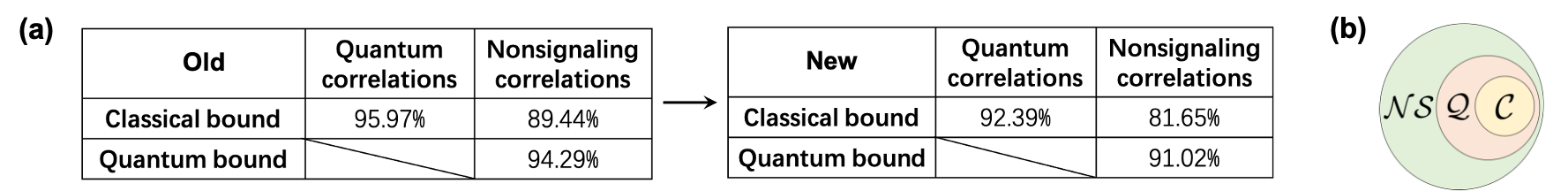}
\caption{(a) The detection efficiency requirements for quantum and nonsignaling correlations to violate classical and quantum causal bounds. Here, ``Old'' refers to the causal bounds in Ref.~\cite{PhysRevLett.125.230401}, and ``New'' refers to the causal bounds $I_{222}$ and $J_{222}$ derived in this Letter. (b) Illustration of the sets of classical correlations ($\mathcal{C}$), quantum correlations ($\mathcal{Q}$), and nonsignaling correlations ($\mathcal{NS}$).}
\label{fig:results}
\end{figure*} 

Despite its importance, the detection efficiency requirement for observing quantum causality has been elusive, partly due to the lack of tools in the emerging topic of quantum causality. In this work, we initiate the study of the detection efficiency  requirement for observing quantum causality, by borrowing tools from quantum nonlocality. As in the Bell scenario, here we assume that the detection efficiency can be manipulated by an adversary Eve. Note that if the efficiency loss is a trusted component, then simply dividing the probability of obtaining an outcome of a detector by the detector's detection efficiency suffices to correct the probability distortion caused by imperfect detection efficiency. The more difficult scenario is the case of untrusted efficiency loss, which we consider here.


We first analyze the detection efficiency requirement for 
showing quantum violation of the classical causal bound used in Ref.~\cite{PhysRevLett.125.230401}, and show it to be $95.97\%$. Next, in an extension of quantum theory, called the nonsignaling theory, we show that the corresponding detection efficiency needs to be at least $89.44\%$.
As these efficiency requirements are quite high, compared to the efficiency requirement $66.7\%$ for bipartite Bell violation with two measurement settings per party \cite{eberhard1993background}, we next aim to find an alternative classical causal bound that is less demanding on the detection efficiency. We manage to find a classical causal inequality $I_{222}$ which can be violated by quantum correlations and nonsignaling correlations with lower detection efficiency. 
Then we extend our  inequality $I_{222}$ to the case of $m$ measurement settings. After that, we examine the detection requirement for nonsignaling violation of the quantum causal bound used in Ref.~\cite{PhysRevLett.125.230401}, and show it to be $94.29\%$. Finally, we derive a  quantum causal bound that has a lower detection requirement of $91.02\%$ to violate. These results are summarized in Fig.~\ref{fig:results}(a).  All violations of causal bounds can be understood through the facts that classical correlations are contained in quantum correlations and quantum correlations are contained in nonsignaling correlations, as illustrated in Fig.~\ref{fig:results}(b).

From a theoretical perspective, our work gives an alternative proof that quantum causality and quantum nonlocality are two different physical concepts by showing that the detection efficiency requirement for observing quantum causality is bounded away from zero (recall the counterpart for quantum nonlocality can be arbitrarily close to zero \cite{PhysRevA.65.032121,PhysRevA.92.052104}). From an experimental perspective, our work provides guidance for the experimental observation of quantum causality and in addition reduces the efficiency requirement on detectors with our causal bounds $I_{222}$ and $J_{222}$. We hope our work will stimulate further theoretical and experimental research on quantum causality.


{\it Causal setup.---}As mentioned in the Introduction, whether a random variable $A$ implies 
another random variable $B$ cannot simply be deduced from the 
fact that they have a positive correlation. The positive correlation
may also result from a common cause of $A$ and $B$, denoted by $\Lambda$. Our 
goal is to separate the causal influence $A\to B$ from 
what can be explained by a potential common cause $\Lambda$. To this end, one
commonly used method is intervention \cite{pearl2009causality}, which sets $A$ to
 a value $a$ by force and examine the probability distribution of $B$, 
denoted by $\{ p[B=b|do(A=a)] \}_b$. To measure 
causality, we use the average causal effect (ACE) \cite{pearl2009causality}, which is defined as 
\begin{equation}
\textrm{ACE} =  \max\limits_{b,a,a'}[p(b| do(a))-p(b| do(a'))].
\end{equation}
This quantity measures the maximum change of $B$'s distribution when $A$'s value is altered. 

However, intervention cannot always be performed. For example, 
when a clinician wishes to examine the effect of smoking on people's health, he/she cannot just
force one group of people to smoke and the other group not to because this is unethical. In these
cases, only indirect estimation of ACE can be performed. One indirect method 
is to introduce a third variable $X$, called the instrumental variable, which 
is independent of the common cause $\Lambda$ and directly causes $A$ but not $B$. The relations between $\Lambda$, $A$, $B$, and $X$ 
are illustrated in the left panel of Fig.~\ref{fig:causalnodes}.
By these relations, we can express the probability of observing $A=a$ and $B=b$ given $X=x$ 
as
\begin{equation}
\label{eq:probclassical}
p(a,b|x) = \sum\limits_\lambda p_A(a|x,\lambda) p_B(b| a,\lambda)p(\lambda),
\end{equation}
and the probability of observing $B=b$ when fixing $A=a$ by force as
\begin{equation}
p(b|do(a)) = \sum\limits_\lambda  p_B(b| a,\lambda)p(\lambda).
\end{equation}
It has been shown that for this setup, no Bell inequalities can be violated \cite{Henson_2014}.

\begin{figure}[htb]
\centering \includegraphics[width=6cm]{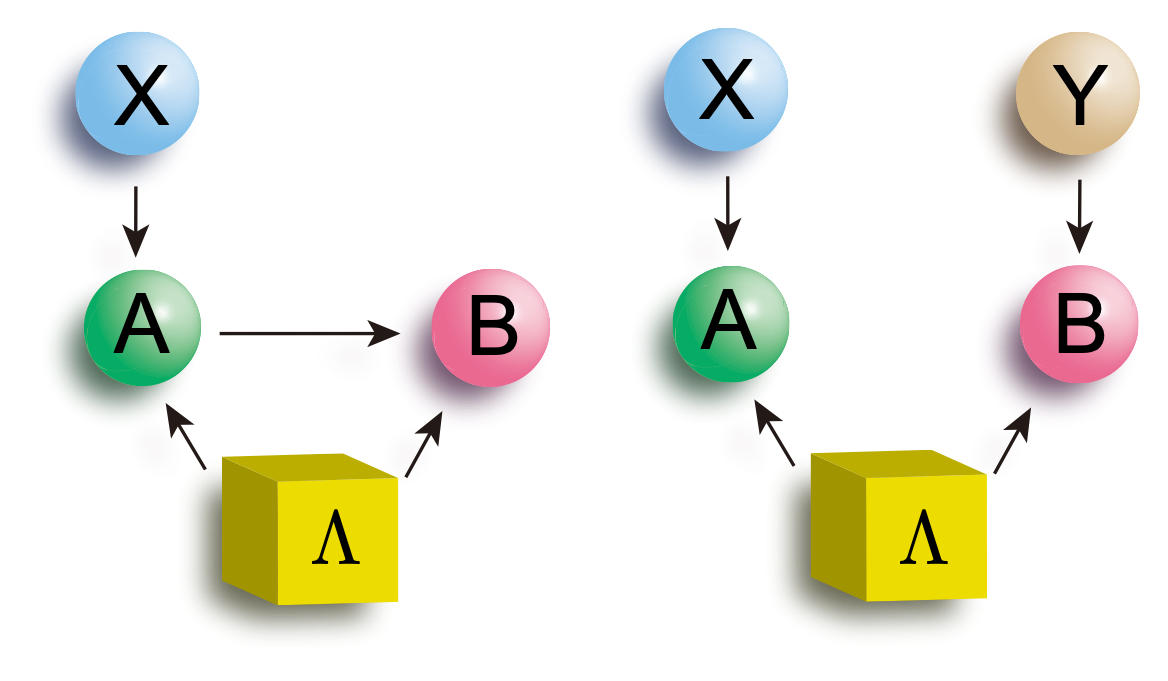}
\caption{Illustration of the causal setup. Left: The random variable $\Lambda$ is the common cause of
two random variables $A$ and $B$. The random variable $X$ has a causal influence on $A$, and $A$ has a 
causal influence on $B$. Right: Mapping of the causal scenario to the Bell scenario. In the Bell scenario, the random variable $B$ is causally influenced by a random variable $Y$ instead of $A$. Identification of $A$ and $Y$ recovers the causal scenario.}
\label{fig:causalnodes}
\end{figure} 

In the quantum setting, the common cause for $A$ and $B$ is replaced by 
a quantum state $\rho_{AB}$. The random variables $A$ and $B$
perform local measurements $M^x_a$ and $N^a_b$, respectively, on
 $\rho_{AB}$ to obtain their outcomes. Here, $M^x_a$ 
and $N^a_b$ use $x$ and $a$, respectively, to choose their 
measurement settings and output $a$ and $b$, respectively.
Hence, in the quantum setting, we can express the probability of observing $A=a$ and $B=b$ given $X=x$ 
as
\begin{equation}
\label{eq:probquantum}
p(a,b|x) = \textrm{tr}[(M^x_a \otimes N^a_b)\rho_{AB}],
\end{equation}
and the probability of observing $B=b$ when fixing $A=a$ by force as
\begin{equation}
p(b|do(a)) = \textrm{tr}[(\mathbbm{1} \otimes N^a_b)\rho_{AB}].
\end{equation}

We can further extend the quantum setting to the nonsignaling theory, 
where correlations are only constrained by the nonsignaling condition. To explain 
the nonsignaling condition in the causal scenario, we first map the causal
probability distribution to a Bell probability distribution as
\begin{equation}
\begin{aligned}
p(a,b| x) &= p_{Bell} (a,b|x,a),  \quad \forall  a,b,x,  \\
p(b|do(a)) &= \sum\limits_{a'} p_{Bell}(a',b|x,a),  \quad \forall  a,b,x. 
\end{aligned}
\end{equation}
where $p_{Bell}$ ($p$) denotes the Bell (causal) probability distribution and
$p_{Bell}(a,b|x,y)$ denotes a bipartite Bell scenario with inputs $x,y$ and outputs $a,b$.

The mapped Bell scenario is illustrated in the right panel of Fig.~\ref{fig:causalnodes}.
The nonsignaling condition is then
\begin{eqnarray*}
\sum\limits_{a'} p_{Bell}(a',b|x_1,y) = \sum\limits_{a'} p_{Bell}(a',b|x_2,y), \;   \forall  y,b,x_1,x_2,  \nonumber \\
\sum\limits_{b'} p_{Bell}(a,b'|x,y_1) = \sum\limits_{b'} p_{Bell}(a,b'|x,y_2),  \; \forall  x,a,y_1,y_2,
\end{eqnarray*}
where $x_1,x_2,x,y_1,y_2,y$ are the inputs and $a,b,a',b'$ are the outputs of the Bell test.

{\it Model of detection efficiency.---}For trusted detectors, we assume their detection efficiency 
is a fixed value independent of the state to be detected. This assumption is physically justified as the detection efficiency of polarization-encoded qubits by photon detectors is independent of the polarization \cite{cao2016source}. 
In later analysis, we will show that even when restricting to the use of trusted detectors, quantum correlations are still able to violate classical causal bounds.

For untrusted detectors, their detection efficiency is not assumed to be independent of the input state. However, these untrusted detectors are enforced to have the same overall probability of detection as trusted detectors, namely, 
\begin{equation}
\sum\limits_{a,b=0}^1 p(a,b|x) = \eta^2, \quad \forall x.
\end{equation}
where $\eta$ is the detection efficiency of trusted detectors. If this condition is violated, we can easily detect the malicious behavior of the detectors and terminate the test of quantum causality. In later analysis, we will show that even with untrusted detectors, classical causal bounds are still obeyed by classical correlations. The proofs are quite different from the case of perfect detection efficiency, which is the only case that previous work has considered \cite{balke1997bounds,PhysRevLett.125.230401}.

\label{sec:results}

{\it Detection efficiency of quantum violation of classical bound.---}With the problem setup in place, we now consider the following classical causal lower bound which first 
appeared in Ref.~\cite{balke1997bounds} and was first compared against quantum correlations in Ref.~\cite{PhysRevLett.125.230401}:
\begin{equation}
\label{eq:CACE}
ACE \ge 2 p(0,0 | 0 ) + p(1,1 |0) + p(0,1|1) + p( 1,1|1) -2,
\end{equation}
where $p(a,b|x)$ is defined as in Eqs.~\eqref{eq:probclassical} and \eqref{eq:probquantum} for classical correlations and quantum correlations, respectively.  An illustration of this bound is shown in Fig.~\ref{fig:illusI222}. 
\begin{figure}[htb]
\centering \includegraphics[width=5cm]{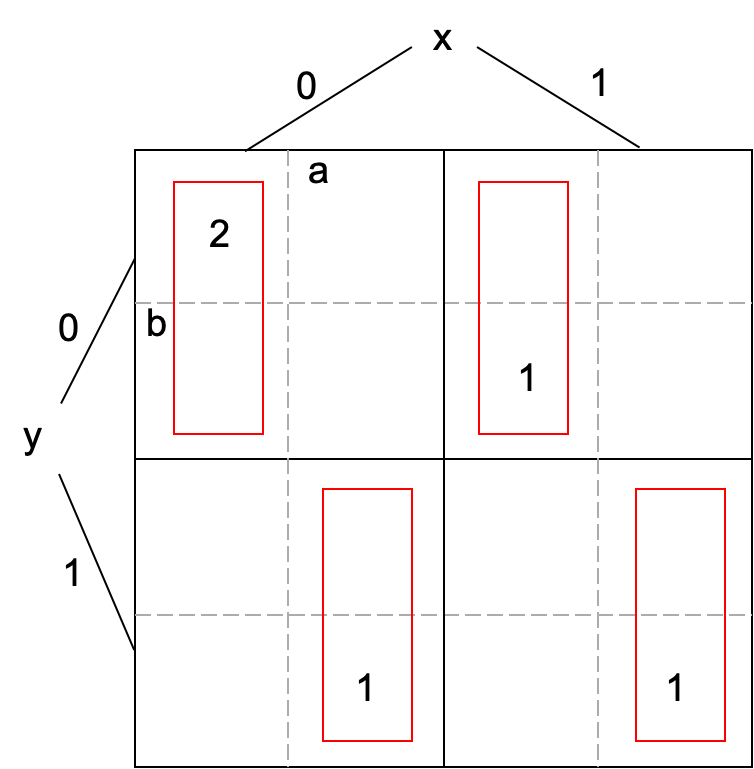}
\caption{Illustration of the classical causality lower bound in Ref.~\cite{PhysRevLett.125.230401}. Because of the condition $a=y$, only the entries in the red boxes can appear in a causal inequality.} 
\label{fig:illusI222}
\end{figure}

We start by examining the required detection efficiency for quantum correlations to violate this bound.
The result is summarized in the following theorem (see Supplemental Material Sec. I for its proof).

\begin{theorem}
\label{thm:quantum2}
When the detection efficiency $\eta$ is larger than $ \sqrt{2/(5-2\sqrt{2})}  \approx 95.97\%$, quantum correlations can violate the classical causal bound Eq.~\eqref{eq:CACE}, the maximal violation of which is $3-2\sqrt{2}$.
\end{theorem}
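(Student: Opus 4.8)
The plan is to turn the violation condition into a bipartite Bell-type optimization via the causal-to-Bell map already in place, and then read off the detection threshold from the ideal ($\eta=1$) quantum optimum. Write $B := 2p(0,0|0)+p(1,1|0)+p(0,1|1)+p(1,1|1)$ for the right-hand side of Eq.~\eqref{eq:CACE}, so that a violation means $B-2-\mathrm{ACE}>0$. First I would use $p(a,b|x)=p_{Bell}(a,b|x,a)$ and $p(b|do(a))=\sum_{a'}p_{Bell}(a',b|x,a)$ to express both $B$ and $\mathrm{ACE}$ through a single quantum behavior given by Eq.~\eqref{eq:probquantum}. Under the trusted-detector model, where the click efficiency is independent of the outcome and the kept probabilities obey $\sum_{a,b}p(a,b|x)=\eta^2$, every retained two-fold-click probability is the ideal quantum value attenuated by $\eta^2$. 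Since $p(b|do(a))$ is itself a sum of such retained probabilities, it scales the same way, giving $B=\eta^2 B_0$ and $\mathrm{ACE}=\eta^2\,\mathrm{ACE}_0$ with the subscript $0$ denoting the $\eta=1$ values. Hence
\[
B-2-\mathrm{ACE}=\eta^2\,(B_0-\mathrm{ACE}_0)-2 ,
\]
and the whole problem reduces to maximizing $V:=B_0-\mathrm{ACE}_0$ over quantum strategies.

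The core step is this quantum optimization. Writing the four retained probabilities in the standard correlator/marginal form with $\pm1$ observables $A_x,B_y$, I would express $B_0$ as a fixed constant $\tfrac54$ plus a linear combination of the marginals $\langle A_0\rangle,\langle B_0\rangle,\langle B_1\rangle$ and the four correlators $\langle A_xB_y\rangle$, while $\mathrm{ACE}_0=\tfrac12\,|\langle B_0\rangle-\langle B_1\rangle|$. The absolute value makes $V$ piecewise linear, so I would split into the two branches $\langle B_0\rangle\ge\langle B_1\rangle$ and its complement. In each branch the marginal of the disadvantaged $B$-setting cancels, and $V$ collapses to $\tfrac54+\tfrac14 S$ with a single marginal-tilted CHSH functional
\[
S=\langle A_0\rangle-\langle B_0\rangle+2\langle A_0B_0\rangle+\langle A_0B_1\rangle-\langle A_1B_0\rangle+\langle A_1B_1\rangle .
\]

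I would then bound the quantum value of $S$. By Jordan's lemma, two dichotomic measurements per party are simultaneously block-diagonal into qubit blocks, so the search reduces to optimizing measurement angles on one qubit pair; the upper bound can be obtained analytically or via a level-one semidefinite (Tsirelson-type) certificate, and the matching lower bound by exhibiting an explicit two-qubit state and measurement directions. The target value is $S_{\max}=15-8\sqrt2$, which yields $V_{\max}=5-2\sqrt2$ and maximal violation $V_{\max}-2=3-2\sqrt2$; I would also verify that the active branch is the one assumed. I expect this optimization to be the main obstacle: the marginal terms $\langle A_0\rangle-\langle B_0\rangle$ preclude a direct appeal to the textbook CHSH bound, so matching the analytic upper bound on the tilted functional $S$ to the explicit saturating construction is the delicate part.

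Finally, inserting $V_{\max}=5-2\sqrt2$ into the violation condition $\eta^2 V_{\max}-2>0$ gives $\eta^2>2/(5-2\sqrt2)$, i.e.\ $\eta>\sqrt{2/(5-2\sqrt2)}\approx95.97\%$, which establishes both the stated threshold and the maximal violation $3-2\sqrt2$.
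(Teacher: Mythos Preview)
Your scaling of the do-probabilities is wrong. In the paper's model $p(b\mid do(a))=\mathrm{tr}[(\mathbbm{1}\otimes N^a_b)\rho_{AB}]$ is an intervention in which $A$ is forced to $a$ and only $B$'s detector is in play, so $p_\eta(b\mid do(a))=\eta\, p(b\mid do(a))$, not $\eta^2$. The identity $p(b\mid do(a))=\sum_{a'}p_{Bell}(a',b\mid x,a)$ holds only for ideal probabilities; with an inefficient $A$-detector the sum over $a'\in\{0,1\}$ misses the no-click events and no longer reproduces the intervention probability. The correct violation condition is therefore $\eta^2 B_0-\eta\,\mathrm{ACE}_0-2>0$, and the right optimization is not simply ``maximize $B_0-\mathrm{ACE}_0$''. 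You are rescued only because the optimal quantum strategy has $\mathrm{ACE}_0=0$ (the paper enforces this by taking $\phi_1=-\phi_0$, which makes $N^0$ and $N^1$ share the same marginal), so the two expressions coincide at the optimum and your numerical threshold happens to come out right. You also omit the paper's first step entirely: re-establishing that Eq.~\eqref{eq:CACE} still holds for all classical strategies when detectors are imperfect and untrusted (outcomes in $\{0,1,\Phi\}$), done there by a nine-case analysis over deterministic $F_B$. Without this, exhibiting a quantum point below the bound does not by itself give a separation in the detection-loophole model.

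On the quantum side your route differs from the paper's: rather than reducing to a tilted CHSH and invoking Jordan's lemma to certify $B_0^{\max}=5-2\sqrt{2}$, the paper simply imports the explicit two-qubit state and measurement angles from the cited reference that already achieve the $3-2\sqrt{2}$ gap with $\mathrm{QACE}=0$, and then reads off the threshold from $2.1716\,\eta^2>2$. Your approach, if carried through, would supply an intrinsic optimality proof rather than a citation, at the cost of the tilted-functional analysis you correctly flag as the delicate step.
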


From this result, we have two observations. First, imperfect detectors suffice for showing quantum violation of classical causal bounds as long as their efficiency exceeds the threshold specified in the theorem. Second, the detection efficiency requirement of the causal inequality Eq.~\eqref{eq:CACE} to be violated by quantum correlations is much higher than Clauser-Horne-Shimony-Holt inequality \cite{clauser1969proposed}, although they have a similar form.

{\it Detection efficiency of nonsignaling violation of classical bound.---}Next, we consider the efficiency bound of the causal inequality Eq.~\eqref{eq:CACE} for the nonsignaling theory.
The  result is given by the following theorem (see Supplemental Material Sec. II for its  proof).

\begin{theorem}
\label{thm:nosignal2}
When the detection efficiency $\eta$ is larger than $ \sqrt{4/5} \approx 89.44\%$, nonsignaling correlations can violate the classical causal bound Eq.~\eqref{eq:CACE}, the maximal violation of which is $1/2$.
\end{theorem}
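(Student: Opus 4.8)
The plan is to compute the largest value of Eq.~\eqref{eq:CACE}'s right-hand side minus the genuine causal effect over all nonsignaling correlations compatible with the efficiency budget $\sum_{a,b}p(a,b|x)=\eta^2$, and then read off the value of $\eta$ at which this maximum first becomes positive. Concretely, I would maximize the violation functional
\begin{equation}
V = 2p(0,0|0) + p(1,1|0) + p(0,1|1) + p(1,1|1) - 2 - \mathrm{ACE}
\end{equation}
over the nonsignaling polytope intersected with the efficiency constraints. Every observed probability is linear in the underlying Bell distribution $p_{Bell}(a,b|x,y)$, and $\mathrm{ACE}=\max_{b,a,a'}[p(b|do(a))-p(b|do(a'))]$ is a maximum of finitely many linear functionals; introducing one auxiliary variable to absorb this maximum turns the whole problem into a linear program, whose optimum sits at a vertex of the constraint polytope.

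First I would pass to the Bell picture via $p(a,b|x)=p_{Bell}(a,b|x,a)$ and $p(b|do(a))=\sum_{a'}p_{Bell}(a',b|x,a)$, making both the four weighted outcome terms and $\mathrm{ACE}$ explicit linear functions of $p_{Bell}$. I would then impose the two nonsignaling conditions: the one that marginalizes over $a'$ renders $p(b|do(a))$ independent of $x$, so $\mathrm{ACE}$ collapses to $\max_{b,a,a'}[p_B(b|a)-p_B(b|a')]$ in terms of $B$'s marginals, sharply cutting the number of free parameters. Imposing $\sum_{a,b}p(a,b|x)=\eta^2$ for each $x$ fixes the detected weight. Solving the resulting small linear program --- by enumerating the relevant nonsignaling vertices (Popescu--Rohrlich-type boxes truncated to weight $\eta^2$) or, for optimality, by exhibiting a matching dual certificate --- then gives the optimal value as an explicit function of $\eta$.

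I expect this optimum to be affine in $\eta^2$, namely $V^*(\eta)=\tfrac{5}{2}\eta^2-2$, which is the natural form since the probability terms and $\mathrm{ACE}$ scale with the detected weight $\eta^2$ while the $-2$ is constant. At $\eta=1$ this yields the stated maximal violation $V^*(1)=1/2$, and $V^*(\eta)>0$ exactly when $\eta^2>4/5$, i.e.\ $\eta>\sqrt{4/5}\approx 89.44\%$, recovering both claims of the theorem. To finish rigorously I would supply the explicit optimal correlation for achievability and a dual feasible point of the linear program for optimality.

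The step I expect to be the main obstacle is the correct bookkeeping of $\mathrm{ACE}$ under imperfect detection. Whereas the observed probabilities $p(a,b|x)$ are manifestly sub-normalized to $\eta^2$, the intervention distribution $p(b|do(a))$ is a marginal of the detection-affected Bell distribution, and one must fix precisely how no-click events feed into it; an error here rescales the coefficient of $\eta^2$ and moves the threshold. Verifying that the branch of the $\mathrm{ACE}$ maximum active at the optimum is the one assumed, and imposing the nonsignaling constraints on the full Bell distribution rather than only its observed diagonal, are the two places where care --- not routine computation --- is required.
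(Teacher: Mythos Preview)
Your approach is correct in outline and reaches the right numbers, but it is considerably more ambitious than what the paper actually does. The paper's proof is purely constructive: it writes down one explicit no-signaling correlation --- the PR-box-like assignment $p(0,0|0)=p(1,1|0)=p(0,1|1)=p(1,1|1)=\tfrac12$, completed to a full Bell distribution by $p_{Bell}(1,1|0,0)=p_{Bell}(0,0|0,1)=p_{Bell}(1,0|1,0)=p_{Bell}(0,0|1,1)=\tfrac12$ --- verifies the no-signaling conditions, computes $\mathrm{CACE}^*=\tfrac52-2=\tfrac12$ and $\mathrm{NACE}=0$, and then applies the paper's efficiency model $p_\eta(a,b|x)=\eta^2 p(a,b|x)$, $p_\eta(b|do(a))=\eta\,p(b|do(a))$ to get $\mathrm{CACE}^*_\eta=\tfrac52\eta^2-2$ and $\mathrm{NACE}_\eta=0$. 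No linear program is solved and no dual certificate is given; the paper simply exhibits the violating correlation. Your LP framework, by contrast, would also certify that $\tfrac12$ is the \emph{maximal} no-signaling violation and that $\sqrt{4/5}$ is the exact threshold, which the paper's proof does not address.

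On the obstacle you flagged: in the paper's model the intervention $do(a)$ bypasses $A$'s measurement entirely, so only $B$'s detector is involved and $p_\eta(b|do(a))$ scales as $\eta$, not $\eta^2$. Your sentence that ``the probability terms and $\mathrm{ACE}$ scale with the detected weight $\eta^2$'' is therefore not correct in this model. Fortunately this is invisible at the optimum, since the PR-box correlation has $p(b|do(a))=\tfrac12$ for all $a,b$ and hence $\mathrm{ACE}=0$; your formula $V^*(\eta)=\tfrac52\eta^2-2$ survives unchanged. If you do carry out the full LP optimality argument, be sure to use the linear-in-$\eta$ scaling for the $\mathrm{ACE}$ term.
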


From this theorem, we observe two facts. Firstly, nonsignaling correlations are able to violate the classical bound more than which is capable by quantum correlations, as $1/2 >3-2\sqrt{2}$. Secondly, the efficiency requirement for nonsignaling correlations to violate the classical bound is lower than the one for quantum correlations. This suggests that in the search for classical bounds with a lower detection efficiency requirement, it is useful to start by examining the detection efficiency requirement for nonsignaling violation of these classical bounds, which we turn to next.

{\it Causal inequality with lower efficiency requirement for nonsignaling violation.---}As can be seen from the proof of Theorem \ref{thm:nosignal2}, the main restriction of the detection efficiency is the constant term $-2$ in the causal inequality Eq.~\eqref{eq:CACE}. In order to reduce the detection efficiency requirement, it suffices to shrink this constant term. To this end,
we consider the following quantity, 
\begin{equation*}
I_{222} = 2 p(0,0 | 0 ) + p(1,1 |0) + p(0,1|1) + p( 1,1|1) - 1 -\eta^2,
\end{equation*}
and a causal inequality
\begin{equation}
\label{eq:CACEnew}
ACE \ge I_{222}.
\end{equation}
The three subscripts of $I_{222}$ represent the range of values for $X$, $A$, and $B$, respectively.
With this causal inequality at hand, we are able to show the following theorem (see Supplemental Material Sec. III for its  proof).

\begin{theorem}
The detection efficiency needs to be at least $ \sqrt{2/3} \approx 81.65\%$ for nonsignaling correlations to violate the classical causal bound Eq.~\eqref{eq:CACEnew}, the maximal violation of which is $1/2$.
\end{theorem}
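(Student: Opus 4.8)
The plan is to reduce the maximal nonsignaling violation to a finite linear program and to read the threshold off its optimum. First I would pass to the mapped Bell distribution $p_{Bell}(a,b|x,y)$ (right panel of Fig.~\ref{fig:causalnodes}), so that the observable causal probabilities are the red-box entries $p(a,b|x)=p_{Bell}(a,b|x,a)$ of Fig.~\ref{fig:illusI222}, while the interventional probabilities are the marginals $p(b|do(a))=\sum_{a'}p_{Bell}(a',b|x,a)$, well defined (independent of $x$) by nonsignaling. The free variables are the sixteen nonnegative entries $p_{Bell}(a,b|x,y)$ subject to the nonsignaling equalities; together with the observable normalization $\sum_{a,b}p(a,b|x)=\eta^2$ these force $\sum_{a,b}p_{Bell}(a,b|x,y)=\eta^2$ for every input pair. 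Since all outcomes are binary, $\sum_b p(b|do(a))=\eta^2$, so the average causal effect collapses to $ACE=|p(1|do(0))-p(1|do(1))|$, the absolute value of the difference of the two $B$-marginals.

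Writing $S:=2p(0,0|0)+p(1,1|0)+p(0,1|1)+p(1,1|1)$, the quantity to be maximized is $I_{222}-ACE=S-1-\eta^2-ACE$, and, since $ACE$ is a maximum of two linear functionals, $\max_{\mathcal{NS}}(S-ACE)$ is a linear program. The decisive remark is that nonsignaling and positivity are homogeneous constraints while the normalization merely rescales with $\eta^2$; hence the feasible set at efficiency $\eta$ is exactly $\eta^2$ times the set at $\eta=1$, and $S$ and $ACE$ are both positively homogeneous. Consequently $\max_{\mathcal{NS}}(S-ACE)=\tfrac{5}{2}\eta^2$, the normalized optimum $5/2$ being the value already obtained in the proof of Theorem~\ref{thm:nosignal2} (where it produces the old-bound violation $5/2-2=1/2$ at $\eta=1$). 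The achievability half is explicit: scaling the nonsignaling box with uniform marginals and relation $a\oplus b=x(1-y)$ gives $S=\tfrac{5}{2}\eta^2$ with $ACE=0$.

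Substituting yields $\max_{\mathcal{NS}}(I_{222}-ACE)=\tfrac{5}{2}\eta^2-1-\eta^2=\tfrac{3}{2}\eta^2-1$, which is strictly positive precisely when $\eta^2>2/3$, i.e. $\eta>\sqrt{2/3}$, and attains its greatest value $1/2$ at $\eta=1$; this gives both the threshold and the maximal violation. Equivalently, since $I_{222}$ differs from the bound of Eq.~\eqref{eq:CACE} only by replacing the constant $-2$ with $-1-\eta^2$, the new violation is the old one plus $1-\eta^2$, which lowers the threshold from $\sqrt{4/5}$ to $\sqrt{2/3}$ while leaving the maximal violation intact. The main obstacle is the optimality direction, $\max_{\mathcal{NS}}(S-ACE)\le\tfrac{5}{2}\eta^2$: this demands a dual certificate, a nonnegative combination of the nonsignaling equalities, the normalization, and positivity, with the two branches of the absolute value in $ACE$ treated separately. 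It remains to note, as established for the classical side, that $ACE\ge I_{222}$ continues to hold for classical correlations under untrusted detection, so that positivity of $\tfrac{3}{2}\eta^2-1$ genuinely certifies a violation.
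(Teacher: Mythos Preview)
Your nonsignaling achievability is correct and matches the paper exactly: the same PR-type box used for Theorem~\ref{thm:nosignal2} gives $S=\tfrac{5}{2}\eta^2$ and $ACE=0$, so $I_{222}-ACE=\tfrac{3}{2}\eta^2-1>0$ precisely when $\eta>\sqrt{2/3}$, attaining $1/2$ at $\eta=1$. The gap is that this is the easy half. The paper's proof of this theorem is mostly devoted to showing that Eq.~\eqref{eq:CACEnew} \emph{is} a valid classical causal bound under imperfect, untrusted detection, i.e., that $p_B(1|1)-p_B(1|0)\ge I_{222}$ for every classical strategy with three-valued outputs $\{0,1,\Phi\}$. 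This is not inherited from the $-2$ version in Eq.~\eqref{eq:CACE}: the replacement $-2\mapsto -1-\eta^2$ tightens the right-hand side, and the paper must reclose each of the nine deterministic cases for $F_B$ by invoking the enforced normalization $\sum_{a,b\in\{0,1\}}p(a,b|x)=\eta^2$ (so that e.g.\ $2p(0,0|0)-1-\eta^2\le 2\eta^2-1-\eta^2\le 0$). Your phrase ``as established for the classical side'' defers exactly this, but it is not established anywhere earlier; it is the new content of the theorem.

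Two smaller points. First, your homogeneity argument treats $ACE$ as degree-one in $p_{Bell}$ and hence scaling as $\eta^2$, whereas the paper's convention is $p_\eta(b|do(a))=\eta\,p(b|do(a))$ (intervention bypasses $A$'s detector), so $ACE$ scales as $\eta$; this is harmless here because the optimal box has $ACE=0$, but it invalidates the homogeneity reduction in general. Second, your appeal to Theorem~\ref{thm:nosignal2} for $\max_{\mathcal{NS}}(S-ACE)=\tfrac{5}{2}$ supplies only achievability; the paper does not prove the optimality direction either, so your dual-certificate remark is apt but goes beyond what the paper actually establishes.
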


The result of this theorem hints that the detection requirement of the quantum case can be reduced as well, which we turn to next.

{\it Causal inequality with lower efficiency requirement for showing quantum violation.---}To reduce the detection efficiency requirement for showing quantum violation of classical causal bounds, we consider the causal inequality Eq.~\eqref{eq:CACEnew} in the quantum scenario. The result is summarized in the following theorem (see Supplemental Material Sec. IV for its  proof).

\begin{theorem} 
The detection efficiency needs to be at least  $ \sqrt{1/(4-2\sqrt{2})}  \approx 92.39\%$ for quantum correlations to violate the classical causal bound Eq.~\eqref{eq:CACEnew}, the maximal violation of which is $3-2\sqrt{2}$.
\end{theorem}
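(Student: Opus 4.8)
The plan is to reduce the claim to the quantum optimization already solved for Theorem~\ref{thm:quantum2}, exploiting that the bounds in Eqs.~\eqref{eq:CACE} and~\eqref{eq:CACEnew} differ only by a state-independent constant. Write the combination appearing in both as $S \equiv 2p(0,0|0)+p(1,1|0)+p(0,1|1)+p(1,1|1)$, and let $V_{\mathrm{old}}=S-2-ACE$ and $V_{\mathrm{new}}=S-1-\eta^2-ACE$ be the amounts by which Eqs.~\eqref{eq:CACE} and~\eqref{eq:CACEnew} are violated. Then
\begin{equation*}
V_{\mathrm{new}} = V_{\mathrm{old}} + (1-\eta^2),
\end{equation*}
with the shift $1-\eta^2\ge 0$ independent of the state $\rho_{AB}$ and the measurements $M^x_a,N^a_b$. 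First I would use this to conclude that, at each fixed $\eta$, the strategy maximizing $V_{\mathrm{new}}$ coincides with the one maximizing $V_{\mathrm{old}}$, so that $\max V_{\mathrm{new}}(\eta)=\max V_{\mathrm{old}}(\eta)+(1-\eta^2)$ and the whole problem collapses onto the optimization behind Theorem~\ref{thm:quantum2}.

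Next I would track the detection efficiency explicitly. In the trusted-detector model a recorded pair $(a,b)$ requires both detectors to fire, so each observed $p(a,b|x)$ equals $\eta^2$ times its ideal quantum value $\mathrm{tr}[(M^x_a\otimes N^a_b)\rho_{AB}]$ from Eq.~\eqref{eq:probquantum}, whereas the interventional distribution $p(b|do(a))$ involves only Bob's detector and carries a single factor $\eta$, giving $ACE=\eta\,\widetilde{ACE}$ with $\widetilde{ACE}\ge 0$ the ideal causal effect. Writing $\widetilde S$ for the ideal value of $S$, we have $V_{\mathrm{old}}(\eta)=\eta^2\widetilde S-\eta\,\widetilde{ACE}-2$. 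Since $\eta\le 1$ and $\widetilde{ACE}\ge 0$, the elementary bound $\eta\widetilde S-\widetilde{ACE}\le\eta(\widetilde S-\widetilde{ACE})$ shows $V_{\mathrm{old}}(\eta)\le \eta^2\max(\widetilde S-\widetilde{ACE})-2$; provided, as established for Theorem~\ref{thm:quantum2}, this optimum is attained at $\widetilde{ACE}=0$ with value $S^\ast\equiv\max_{\widetilde{ACE}=0}\widetilde S$, the bound is saturated and $\max V_{\mathrm{old}}(\eta)=S^\ast\eta^2-2$ exactly. Matching the $\eta=1$ maximal violation $3-2\sqrt2$ of Theorem~\ref{thm:quantum2} fixes $S^\ast=5-2\sqrt2$, which is also consistent with its threshold $\eta^2=2/(5-2\sqrt2)$.

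Substituting into the shift relation then gives
\begin{equation*}
\max V_{\mathrm{new}}(\eta) = (4-2\sqrt2)\,\eta^2 - 1 .
\end{equation*}
Setting the right-hand side to zero yields the threshold $\eta^2=1/(4-2\sqrt2)$, i.e. $\eta=\sqrt{1/(4-2\sqrt2)}\approx 92.39\%$: below it no quantum violation is possible, and above it a violation appears. Because the expression is strictly increasing in $\eta$, its largest value over the physical range $\eta\le 1$ is attained at $\eta=1$ and equals $(4-2\sqrt2)-1=3-2\sqrt2$, the claimed maximal violation.

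The main obstacle lies entirely in the fact $S^\ast=5-2\sqrt2$: one must prove that over all bipartite quantum states and local qubit measurements $S$ cannot exceed $5-2\sqrt2$, and that this value is reached by a strategy with vanishing causal effect $\widetilde{ACE}=0$. Exhibiting a good point is routine, but the upper bound is delicate, since a priori one might trade a small nonzero $\widetilde{ACE}$ (penalized only linearly in $\eta$) against a larger $\widetilde S$ (rewarded quadratically); the inequality in the previous paragraph shows that such a trade can never help once $\widetilde{ACE}=0$ is optimal at $\eta=1$, so the crux is precisely the perfect-efficiency Tsirelson-type bound $S\le 5-2\sqrt2$ with saturation at $\widetilde{ACE}=0$. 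I would settle this by a Bloch-vector/Lagrangian optimization or the semidefinite relaxation underlying Theorem~\ref{thm:quantum2}; doing so makes $\max V_{\mathrm{new}}(\eta)$ an exact quadratic and thereby pins down both the threshold $92.39\%$ and the maximal violation $3-2\sqrt2$.
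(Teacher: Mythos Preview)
Your proposal is correct and follows essentially the same route as the paper: both reduce Theorem~4 to the optimization already carried out for Theorem~\ref{thm:quantum2}, exploit that the optimal quantum strategy there has $\widetilde{ACE}=0$ with $\widetilde S=5-2\sqrt{2}$, and then compute $I_{222}$ under detection efficiency $\eta$ as $(S^\ast-1)\eta^2-1=(4-2\sqrt{2})\eta^2-1$, from which the threshold $\eta=\sqrt{1/(4-2\sqrt{2})}$ and the maximal violation $3-2\sqrt{2}$ at $\eta=1$ drop out immediately.

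The one minor difference is emphasis. The paper's proof simply plugs in the explicit state and measurements from Sec.~\ref{AppSec:1} and reads off $CACE^*_\eta=1.1716\,\eta^2-1$, thereby establishing sufficiency of $\eta>92.39\%$; it does not separately argue the necessity direction, relying implicitly on the cited result that $3-2\sqrt{2}$ is the \emph{maximal} quantum violation. Your shift identity $V_{\mathrm{new}}=V_{\mathrm{old}}+(1-\eta^2)$ together with the inequality $\eta^2\widetilde S-\eta\widetilde{ACE}\le\eta^2(\widetilde S-\widetilde{ACE})$ makes the necessity direction explicit and clarifies why no trade-off between $\widetilde S$ and $\widetilde{ACE}$ can beat the $\widetilde{ACE}=0$ strategy at any $\eta\le 1$. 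That is a cleaner justification of the word ``at least'' in the theorem statement, but the underlying computation and the reliance on the Tsirelson-type bound $S^\ast=5-2\sqrt{2}$ (inherited from Ref.~\cite{PhysRevLett.125.230401}) are the same in both arguments.
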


The fact that quantum violation of causality inequalities requires near unit detection efficiency stands in stark contrast to the Bell scenario. In the Bell case, the detection efficiency requirement of bipartite Bell violation with quantum correlations can approach 0 with a large quantum system dimension and an exponential number of measurement settings with respect to the quantum system dimension \cite{PhysRevA.65.032121}. Alternatively, the upper bound on the quantum efficiency requirement for Bell violation for the $n$-partite case with $n$ measurement settings per party is $2/(n+1)$ \cite{PhysRevA.92.052104}, which also approaches 0 as $n$ goes to infinity. This large discrepancy of the detection efficiency requirement between the causal scenario and the Bell scenario shows that quantum causality is different from quantum nonlocality. The main obstacle for reducing the detection efficiency requirement for showing quantum violation of causal inequalities lies in the fact that a causal inequality can only contain the terms of the form $p(a,b|x)=p_{Bell}(a,b|x,a)$ while a Bell inequality can contain all terms of the form $p_{Bell}(a,b|x,y)$.

{\it  Nonsignaling violation of causal bounds with more than two measurement settings.---}Then we generalize the causal inequality to multiple measurement settings. 
In more detail, suppose $a,b\in \{0, 1\}$ and $x \in \{0,\dots, M-1\}$. We consider the following quantity:
\begin{equation}
\begin{aligned}
I_{M22} = \frac{M}{M-1} p(0,0| 0) + \frac{1}{M-1} p(0,1|1)  \\
 +\frac{1}{M-1} \sum\limits_{x=0}^{M-1} p(1,1|x) -1 -\frac{\eta^2}{M-1} .  
\end{aligned}
\end{equation}
An illustration of $I_{322}$ and $I_{422}$ is shown in Fig.~\ref{fig:Im22}.
 \begin{figure}[htb]
\centering \includegraphics[width=8.5cm]{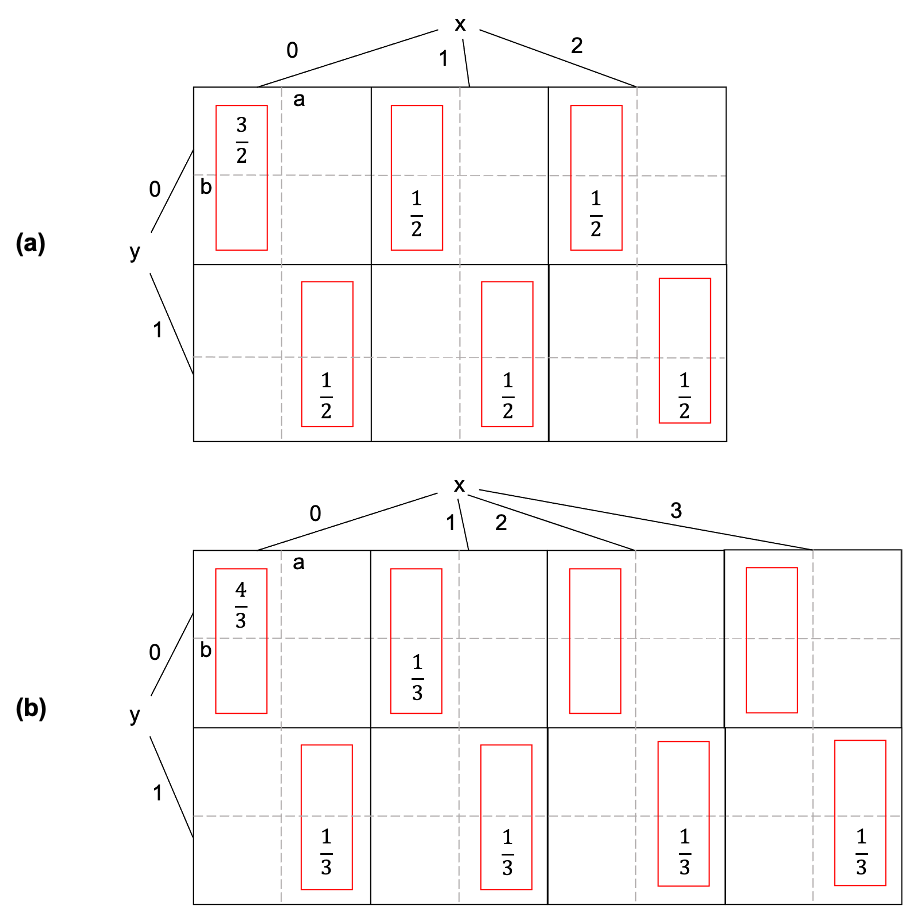}
\caption{Illustration of classical causal bounds with $M$ measurement settings for (a) $M=3$ and (b) $M=4$.} 
\label{fig:Im22}
\end{figure} 
The causal inequality we examine is 
\begin{equation}
\label{eq:multi}
ACE \ge I_{M22}.
\end{equation} 

The detection efficiency requirement of this causal inequality is summarized in the following theorem (see Supplemental Material Sec. V for its  proof).

\begin{theorem}
When system $A$ has $M$ measurement settings, as long as the detection efficiency exceeds $\sqrt{(2M-2)/(2M-1)}$, nonsignaling correlations can violate the classical bound Eq.~\eqref{eq:multi}, the maximal violation of which is $1/(2M-2)$.
\end{theorem}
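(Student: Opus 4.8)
\emph{Proof plan.} I would phrase the statement as a linear program (LP) over nonsignaling (NS) behaviors and solve it by matching an explicit feasible behavior against an explicit dual certificate. Passing to the mapped Bell picture, the three relevant terms are $p(0,0|0)=p_{Bell}(0,0|0,0)$, $p(0,1|1)=p_{Bell}(0,1|1,0)$, and $p(1,1|x)=p_{Bell}(1,1|x,1)$. Writing $q(b|a):=p(b|do(a))=\sum_{a'}p_{Bell}(a',b|x,a)$, the NS conditions make $q(b|a)$ independent of $x$, while the detection constraint $\sum_{a,b}p(a,b|x)=\eta^2$ forces $q(0|a)+q(1|a)=\eta^2$; hence $\textrm{ACE}=|q(0|0)-q(0|1)|$. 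Maximizing the violation $I_{M22}-\textrm{ACE}$ over NS behaviors obeying the detection constraint is then a finite LP, in which the maximum defining $\textrm{ACE}$ is linearized by a scalar $t$ subject to $t\ge \pm[q(0|0)-q(0|1)]$.

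\emph{Achievability.} For the primal I would exhibit the symmetric behavior in which every $A$- and $B$-marginal equals $\eta^2/2$ and $p_{Bell}(0,0|0,0)=p_{Bell}(0,1|1,0)=p_{Bell}(1,1|x,1)=\eta^2/2$ for all $x$, each $2\times2$ input block carrying its total weight $\eta^2$ on a single diagonal or anti-diagonal. A direct check confirms that this behavior is nonsignaling, meets the detection constraint, and has $\textrm{ACE}=0$, so that $I_{M22}-\textrm{ACE}=\tfrac{(2M-1)\eta^2}{2(M-1)}-1$.

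\emph{Optimality.} For the matching upper bound I would assemble, from positivity, the NS conditions, and the detection constraint, the ingredients
\begin{gather*}
p_{Bell}(0,0|0,0)+p_{Bell}(1,1|0,1)\le\eta^2,\quad p_{Bell}(0,1|1,0)+p_{Bell}(1,1|1,1)\le\eta^2, \\
p_{Bell}(0,1|1,0)+q(0|0)\le\eta^2,\quad p_{Bell}(1,1|x,1)+q(0|1)\le\eta^2, \\
p_{Bell}(0,0|0,0)\le q(0|0),\quad \textrm{ACE}\ge q(0|0)-q(0|1),
\end{gather*}
where the first two follow from summing the $A$-marginals of the $x=0$ and $x=1$ blocks, and the middle two from summing three of the four entries of a single block. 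I would then take a nonnegative combination of these inequalities whose $M$-dependent weights (read off from complementary slackness with the primal above) cancel all $q(0|0),q(0|1)$ dependence, reproduce $I_{M22}-\textrm{ACE}$ on the left, and produce exactly $\tfrac{(2M-1)\eta^2}{2(M-1)}-1$ on the right. This certifies that the primal value is optimal, so the maximal violation is $\tfrac{(2M-1)\eta^2-(2M-2)}{2M-2}$, which is positive precisely when $\eta>\sqrt{(2M-2)/(2M-1)}$ and equals $1/(2M-2)$ at $\eta=1$.

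\emph{Main obstacle.} The reduction and the achievability construction are routine; the crux is the dual certificate. The difficulty is to identify, uniformly in $M$, which of the block, marginal, and cross constraints are simultaneously tight at the optimum, and to choose the weights so that all the $q$-terms cancel exactly. This is a genuine subtlety: the cross constraints on the $x=0,1$ blocks are indispensable, and using only the per-block marginal bounds overshoots the target by $\eta^2/(2M-2)$, because the looser relaxation fails to forbid driving $q(0|1)$ to its extreme value.
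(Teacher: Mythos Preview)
Your achievability construction is essentially the paper's: the PR-box-like behavior with each input block concentrated on a single (anti)diagonal, giving $I_{M22}-\textrm{ACE}=\tfrac{(2M-1)\eta^2}{2(M-1)}-1$ and hence violation precisely when $\eta>\sqrt{(2M-2)/(2M-1)}$, with value $1/(2M-2)$ at $\eta=1$.

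The LP-duality certificate you outline for \emph{nonsignaling optimality} is not in the paper at all. The paper merely exhibits the NS behavior and reads off its value; it never argues that $1/(2M-2)$ is the supremum over all NS correlations. So your optimality step would be a genuine strengthening, and the complementary-slackness route (tracking which block, marginal, and cross constraints are tight at the symmetric primal) is the right way to certify it.

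There is, however, a real gap in your plan: you never show that $\textrm{ACE}\ge I_{M22}$ is a valid \emph{classical} bound under imperfect, untrusted detection. Without this, the phrase ``violate the classical bound'' is empty, and your NS-side LP does not help here, since classical correlations with untrusted no-click form a strictly smaller polytope than NS correlations. In the paper this is in fact the bulk of the proof: by linearity one reduces to deterministic strategies $F_A:\{0,\dots,M-1\}\to\{0,1,\Phi\}$ and $F_B:\{0,1\}\to\{0,1,\Phi\}$ (with $\Phi$ the no-click outcome), and then checks the nine cases indexed by $(F_B(0),F_B(1))\in\{0,1,\Phi\}^2$, each time using $\sum_{a,b}p(a,b|x)=\eta^2$ to bound the surviving terms of $I_{M22}$ against the corresponding value of $p_B(1|1)-p_B(1|0)$. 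You need to supply this argument (or an equivalent one) to complete the proof.

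A minor point: your normalization $q(0|a)+q(1|a)=\eta^2$ disagrees with the paper's convention, where $p_\eta(b|do(a))=\eta\,p(b|do(a))$ because only $B$'s detector is involved in an intervention. This does not affect your achievability (where $\textrm{ACE}=0$) but would need to be straightened out before your dual-certificate weights can be made to close exactly.
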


It can be seen that an increase in the measurement settings leads to a smaller violation of the causal inequality and a higher detection requirement. This fact also differs from the Bell scenario where more measurement settings will result in a larger violation and a lower detection requirement \cite{PhysRevA.94.042126}.

{\it Detection efficiency of nonsignaling correlations to violate quantum causal bound.---}It has been postulated that quantum theory cannot fully explain phenomena in ultra-high-density objects such as black holes
or in ultra-small scales such as a Planck length, and an extension of quantum theory is needed in these regimes \cite{Preskill1992Do}.
To witness such a theory, we consider the task of violating quantum causal bounds.

As shown in Ref.~\cite{PhysRevLett.125.230401}, quantum correlations obey a causal lower bound
\begin{equation}
\label{eq:QACE}
ACE \ge \sum\limits_{x=0,1} [p(0,0|x) + p(1,1|x)] - \xi - 1, 
\end{equation}
where 
\begin{equation*}
\xi = \min\limits_{\pm} \sqrt{  \prod\limits_{a=0,1} \{ 1\pm \sum\limits_{x=0,1} (-1)^x[ p(a,0|x) -  p(a,1|x) ]  \}   } .
\end{equation*}
To witness a theory more general than quantum theory such as the nonsignaling theory, we consider its violation of this bound.
The result is summarized in the following theorem (see Supplemental Material Sec. VI for its  proof).

\begin{theorem}
When the detection efficiency $\eta$ is larger than $94.29\%$, nonsignaling correlations can violate the quantum bound Eq.~\eqref{eq:QACE}, the maximal violation of which is $1/2$.
\end{theorem}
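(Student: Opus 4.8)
The plan is to cast the theorem as a constrained optimization. I would write the violation as
$V = \sum_{x=0,1}[p(0,0|x)+p(1,1|x)] - \xi - 1 - \mathrm{ACE}$
and maximize it over all distributions $p(a,b|x)$ lying in the nonsignaling polytope (expressed through the mapped Bell distribution $p_{Bell}$) and obeying the detection-efficiency constraint $\sum_{a,b}p(a,b|x)=\eta^2$ for every $x$. Proving the theorem then splits into two tasks: showing that $\max V > 0$ exactly when $\eta$ exceeds the stated threshold, and evaluating $\max V$ itself, which I claim equals $1/2$.

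First I would remove the $\mathrm{ACE}$ term. Using $p(b|do(a))=\sum_{a'}p_{Bell}(a',b|x,a)$ together with the nonsignaling condition, $p(b|do(a))$ collapses to the $B$-marginal conditioned on the second-party input $y=a$; restricting to correlations whose $B$-marginal is independent of $y$ therefore forces $\mathrm{ACE}=0$. Since $\mathrm{ACE}$ enters $V$ with a minus sign, I expect the optimum to lie on this $\mathrm{ACE}=0$ face, which I would confirm, reducing the problem to maximizing the reduced objective $g(p)=\sum_{x}[p(0,0|x)+p(1,1|x)]-\xi-1$.

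The structural key is the term $\xi=\min_{\pm}\sqrt{\prod_{a}\{1\pm S_a\}}$, where $S_a=\sum_{x=0,1}(-1)^x[p(a,0|x)-p(a,1|x)]$ is affine in $p$. Each factor $\sqrt{(1+S_0)(1+S_1)}$ is a geometric mean and hence concave, so $\xi$ (a minimum of concave functions) is concave, $-\xi$ is convex, and $g$ is a \emph{convex} function of $p$. A convex function on a compact convex set attains its maximum at an extreme point, so the optimum is an extremal nonsignaling box scaled by $\eta^2$, with a suitable assignment of the undetected weight $1-\eta^2$ among the no-click events. The work then becomes a finite enumeration: list the candidate vertices (the PR-box-type and deterministic boxes), evaluate $g$ on each after distributing the slack $1-\eta^2$, and retain the best. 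Symmetry of $g$ under swapping the two outputs and the two factors of the product should cut the enumeration down substantially.

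From the winning vertex I would read off $V$ as an explicit function of $\eta$: evaluating at $\eta=1$ gives the maximal violation $1/2$, and solving $V(\eta)=0$ gives the efficiency threshold. I expect the main obstacle to be exactly this last optimization against $\xi$: because $\xi$ at the extremal box is the square root of a quadratic in $\eta$, the function $V(\eta)$ is of the form $(\text{affine in }\eta^2) - \sqrt{(\text{quadratic in }\eta^2)}$, so the equation $V(\eta)=0$ has no clean radical solution and must be solved numerically, which is precisely why the threshold appears as $94.29\%$ rather than in closed form as in the earlier classical-bound cases. The remaining care is bookkeeping: verifying that the constructed box genuinely satisfies both nonsignaling conditions and the detection constraint, and confirming that no vertex with $\mathrm{ACE}>0$ beats the $\mathrm{ACE}=0$ optimum.
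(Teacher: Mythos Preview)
Your plan is correct in substance but takes a substantially more elaborate route than the paper. The paper does not optimize at all: it simply \emph{writes down} the PR-box correlation $p_{Bell}(0,0|0,0)=p_{Bell}(1,1|0,1)=p_{Bell}(0,1|1,0)=p_{Bell}(1,1|1,1)=1/2$ (and symmetric entries), checks nonsignaling, and computes. For that single correlation one finds $S_0=1$, $S_1=0$, hence $\xi=0$ and a violation of $1/2$ at $\eta=1$; after the rescaling $p_\eta=\eta^2 p$ one gets $\xi_\eta=\sqrt{1-\eta^2}$ and $V(\eta)=\tfrac32\eta^2-\sqrt{1-\eta^2}-1$, and solves $V(\eta)=0$. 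Your convexity/vertex-enumeration argument would, if carried through, additionally establish that this PR box is the \emph{optimal} nonsignaling strategy and that $1/2$ is truly the maximal violation---something the paper asserts but does not prove. So your approach buys rigor at the cost of the case analysis you describe; the paper's approach buys brevity by guessing the right correlation.

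Two small corrections. First, your expectation that the threshold has no closed form is wrong: $V(\eta)=0$ reads $\tfrac32\eta^2-1=\sqrt{1-\eta^2}$, and squaring gives $\tfrac94\eta^4-2\eta^2=0$, hence $\eta=2\sqrt{2}/3\approx 94.28\%$. The numerical appearance in the statement is cosmetic, not a sign of algebraic intractability. Second, in the paper's detection model for this theorem there is no freedom in ``distributing the undetected weight $1-\eta^2$'': the trusted-detector rule is simply $p_\eta(a,b|x)=\eta^2\,p(a,b|x)$ and $p_\eta(b|do(a))=\eta\,p(b|do(a))$, so the no-click mass is fixed and your optimization really is over the ordinary (normalized) nonsignaling polytope, with $\eta$ entering only through these rescalings when you evaluate $V$.
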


This result has two implications. First, it shows that indeed the nonsignaling theory can be witnessed through its violation of a quantum causal bound.
Second, even with imperfect detection efficiency, the quantum causal bound can still be violated.

{\it Quantum causal bound with a lower efficiency requirement.---}To lower the detection efficiency requirement for nonsignaling violation of quantum
causal bounds, we consider the following quantity
\begin{eqnarray*}
J_{222} &=&[p(0,0|0) + p(1,1|0)-p(0,1|1) -p(1,0|1) - \xi ]/\eta,  \\
\xi &=& \min\limits_{\pm} \sqrt{  \prod\limits_{a=0,1} \{ 1 \pm \sum\limits_{x=0,1} (-1)^x[ p(a,0|x) -  p(a,1|x) ]  \}   } , \nonumber
\end{eqnarray*}
and the following inequality
\begin{equation}
\label{eq:QACE2}
ACE \ge J_{222}.
\end{equation}
Here, the subscripts of $J_{222}$ have similar meanings with that of $I_{222}$.
The property of the inequality Eq.~\eqref{eq:QACE2} is given in the following theorem (see Supplemental Material Sec. VII for its  proof).

\begin{theorem}
When the detection efficiency $\eta$ is larger than $91.02\%$, nonsignaling correlations can violate the quantum causal bound Eq.~\eqref{eq:QACE2}, the maximal violation of which is $1/2$.
\end{theorem}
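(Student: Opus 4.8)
The plan is to prove the theorem in two stages: an explicit nonsignaling construction that attains both the claimed threshold and the maximal violation, and a matching optimality argument showing that no nonsignaling correlation does better. Under the causal identification $y=a$, the linear part of $J_{222}$, namely $p(0,0|0)+p(1,1|0)-p(0,1|1)-p(1,0|1)$, is a CHSH-type correlator, so I expect the extremal nonsignaling strategy to be a Popescu--Rohrlich (PR) box. First I would write the full normalized Bell distribution $p_{Bell}(a,b|x,y)$ on outcomes augmented by a no-click symbol and model untrusted detection by letting each party click independently with probability $\eta$, realizing PR correlations on a double click. This yields the sub-normalized click--click distribution $p(a,b|x)=\eta^2 p_{PR}(a,b|x,a)$, which obeys the efficiency constraint $\sum_{a,b}p(a,b|x)=\eta^2$.

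Second I would evaluate the three ingredients of the bound on this candidate. The linear part equals $\eta^2/2$; writing $S_a=\sum_{x}(-1)^x[p(a,0|x)-p(a,1|x)]$ one finds $S_0=0$ and $S_1=-\eta^2$, so $\xi=\min(\sqrt{1-\eta^2},\sqrt{1+\eta^2})=\sqrt{1-\eta^2}$; and because the PR marginals are flat, $p(b|do(a))=\sum_{a'}p_{Bell}(a',b|x,a)=\eta^2/2$ is independent of $a$, whence $\mathrm{ACE}=0$. Therefore $J_{222}-\mathrm{ACE}=(\eta^2/2-\sqrt{1-\eta^2})/\eta$. Positivity of this expression, after setting $u=\eta^2$, reduces to $u^2+4u-4>0$, i.e. $\eta>\sqrt{2\sqrt2-2}\approx 91.02\%$; and letting $\eta\to1$ sends the violation to $1/2$. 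This single construction therefore delivers both the threshold and the value $1/2$.

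Third I would establish optimality, i.e. that $J_{222}-\mathrm{ACE}\le 1/2$ for every nonsignaling distribution and that none violates the bound below the threshold, so that the PR construction is extremal. I would parametrize the nonsignaling box by the marginals of $A$ and $B$ together with the four correlators $\langle A_xB_y\rangle$, substitute $y=a$ into the causal terms, and write $\mathrm{ACE}=\max_b|p_{Bell}(b|y{=}0)-p_{Bell}(b|y{=}1)|$ as a piecewise-affine function of Bob's marginals.

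The main obstacle is the nonlinearity of $\xi$: for the earlier classical bounds the objective was affine in the probabilities, so the nonsignaling optimization reduced to inspecting the vertices of the nonsignaling polytope, whereas here the square-root term couples the two outputs. I would overcome this by noting that each branch $\sqrt{(1\pm S_0)(1\pm S_1)}$ is a geometric mean and hence concave in $(S_0,S_1)$, so $\xi=\min_\pm\sqrt{(1\pm S_0)(1\pm S_1)}$ is concave and the numerator $\text{(linear)}-\xi$ is convex. Splitting the polytope into the regions on which a fixed triple $(a,a',b)$ attains the $\mathrm{ACE}$ maximum makes $-\mathrm{ACE}$ affine there, so $J_{222}-\mathrm{ACE}$ is convex on each region and its maximum is attained at an extreme point of the detection-dressed nonsignaling polytope. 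A finite enumeration of these extreme points---local-deterministic boxes and PR boxes---would then confirm that the PR box of the first stage is optimal, giving maximal violation $1/2$ and threshold $\sqrt{2\sqrt2-2}$. The delicate point is verifying that this convexity-plus-case-split reduction is valid throughout the feasible region, in particular that the arguments of the square roots stay nonnegative there.
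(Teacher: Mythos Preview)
Your construction and threshold computation are correct and match the paper's: the paper also uses a PR-type nonsignaling box (a relabeled variant of yours), obtains linear part $\eta^2/2$, $\xi=\sqrt{1-\eta^2}$, $\mathrm{ACE}=0$, and solves $\eta^2/2>\sqrt{1-\eta^2}$ to get $\eta>\sqrt{2\sqrt{2}-2}\approx 91.02\%$ with value $1/2$ at $\eta=1$.

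However, there is a genuine gap. You have taken for granted that $\mathrm{ACE}\ge J_{222}$ \emph{is} a quantum causal bound, and then spent your effort on finding and optimizing a nonsignaling violator. But $J_{222}$ is a new inequality introduced in this paper; nothing prior establishes that every quantum correlation obeys it. The paper's proof devotes its main effort precisely to this point: it introduces the Bell-type operator
\[
B(\alpha,\beta,\gamma,\delta)=-\alpha\langle M^0\otimes N^0\rangle+\beta\langle M^0\otimes N^1\rangle+\gamma\langle M^1\otimes N^0\rangle+\delta\langle M^1\otimes N^1\rangle,
\]
sets $\gamma=1+\alpha$, $\delta=1-\beta$, $\beta=(1+\alpha)/(1+2\alpha)$ so that the Tsirelson-type bound $B\le|\alpha+\beta|$ applies, and then optimizes over $\alpha$ via a lemma showing
\[
\max_\alpha\bigl(-\alpha x'-\beta y'-|\alpha+\beta|\bigr)=\tfrac12(x'-y')-\min_{\pm}\sqrt{(1\pm x')(1\pm y')}.
\]
Combining this with an algebraic identity for the linear probability terms yields $\eta\,\mathrm{QACE}\ge p(0,0|0)+p(1,1|0)-p(0,1|1)-p(1,0|1)-\xi$, which is exactly the validity of the bound. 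Without this step, ``violating'' $J_{222}$ has no content: you have not shown there is anything quantum about the inequality being violated.

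A secondary remark: your third stage (optimality over all nonsignaling correlations) goes beyond what the paper actually proves---the paper simply exhibits the PR-type box and states $1/2$ as the maximal violation without a matching upper bound. Your convexity-plus-vertex-enumeration idea is reasonable in spirit, but be careful: once the no-click outcome is added, the relevant polytope is a three-outcome nonsignaling polytope intersected with the constraint $\sum_{a,b\in\{0,1\}}p(a,b|x)=\eta^2$, and its extreme points are not just the usual two-outcome local and PR boxes. In any case, this part is optional relative to the paper; the missing validity proof is the essential omission.
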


This result shows that a higher detection efficiency 
is needed for nonsignaling correlations to violate a quantum causal bound than a classical causal bound.

{\it Applications.---}Similar to quantum nonlocality, quantum causality without loopholes has applications to a wide range of device-independent protocols. In particular, even in scenarios where no Bell inequalities can be violated (i.e., the resource of quantum nonlocality is not available), device independence can still be guaranteed by exploiting quantum causality. Consider the following application of quantum causality to device-independent quantum random number generation. When a classical causal bound, such as Eq.~\eqref{eq:CACE}, is maximally violated, the outcome $b$ of the variable $B$ must be generated by a quantum process. Hence, the value $b$ is genuinely random by Born's rule. We leave a more detailed analysis of this protocol, including its random number generation rate as a function of the value of violation, to future work. As another example, we apply quantum causality to  device-independent quantum key distribution (DI-QKD). As the maximal violation of a causal inequality certifies the presence of an entangled state between $A$ and $B$, an entanglement-based quantum key distribution procedure combined with occasionally testing the violation of the causal inequality suffices to achieve DI-QKD. We also leave a more detailed analysis of this protocol, including its key rate as a function of the violation, to future work.

{\it Conclusion.---}Quantum causality provides a distinct new lens to understand the foundation of quantum theory.
In this work, we have analyzed the detection efficiency requirements for quantum correlations and 
nonsignaling correlations to violate the causal inequalities. We have shown that quantum 
violation requires detection efficiency at least $95.97\%$ and nonsignaling violation requires detection efficiency at least $89.44\%$ in previous causal bounds. 
To lower the efficiency requirement, we have proposed a causal inequality $I_{222}$. 
We have shown that both nonsignaling correlations and quantum correlations can violate this  inequality
with lower detection efficiency. In addition, we have generalized this  causal inequality to 
multiple measurement settings. We have examined the detection requirement for nonsignaling violation of the quantum causal bound derived in Ref.~\cite{PhysRevLett.125.230401}. We have also proposed a  quantum causal bound $J_{222}$ that is less demanding on detection efficiency for showing nonsignaling violation.

Our work advances the emerging field of quantum causality both experimentally and theoretically. From an experimental perspective, our work provides guidance for the selection of experimental parameters and hence paves the way towards experimental observation of quantum causality. From a theoretical perspective, our work shows that quantum causality differs from quantum nonlocality in terms of the detection efficiency requirement (one is bounded away from zero, while the other can approach 0). This deepens our understanding of quantum causality and consequently the foundation of quantum theory.

Our work opens a few prospective avenues for future research. On the experimental side,
it would be interesting to experimentally demonstrate the quantum violation of our  causal bounds.
On the theoretical side, extending the quantum violation of causal 
inequalities to the case of more than two parties and examining the corresponding 
detection efficiency requirement are interesting open problems.

\begin{acknowledgements}
This work was supported by National Natural Science Foundation of China (Basic Science Center Program: 61988101), International (Regional) Cooperation and Exchange Project (61720106008), National Natural Science Fund for Distinguished Young Scholars (61725301),  Natural Science Foundation of Shanghai (21ZR1415800), Shanghai Sailing Program, and the startup fund from East China University of Science and Technology (SLH00212004).
\end{acknowledgements}

\bibliographystyle{apsrev4-1}

\bibliography{BibliCausal}


\renewcommand{\theequation}{S\arabic{equation}}
\setcounter{equation}{0}
\renewcommand{\thefigure}{S\arabic{figure}}
\setcounter{figure}{0}
\renewcommand{\thesection}{S\arabic{section}}
\setcounter{section}{0}

\newpage

\onecolumngrid


\vspace*{2cm}
\begin{center}
	{\bf \Large 
		Supplemental Material to\vspace*{0.3cm}\\ 
		\emph{Detection loophole in quantum causality and its countermeasures}
	}
\end{center}

\section{Proof of Theorem 1}
\label{AppSec:1}
The proof consists of two steps. In the first step, we show that, for the case $a,b,x\in \{0, 1\}$, no classical correlation can violate the inequality 
Eq.~(8) in the main text even if the detection efficiency is imperfect and the detectors are untrusted.
In the second step, we show that as long as $\eta>95.97\%$, there exists a 
quantum correlation violating this inequality even with trusted detectors. 

Let us start with the first step. We will prove
\begin{equation}
\label{eq:CACE1}
p(1| do(1)) -  p(1| do(0)) \ge 2 p(0,0 | 0 ) + p(1,1 |0) + p(0,1|1) + p( 1,1|1) -2
\end{equation}
holds classically even if the detection efficiency is imperfect.

 By the linearity of Eq.~\eqref{eq:CACE1}, we only need to consider deterministic classical strategies. 
This implies two facts. First, the right hand side (RHS) of Eq.~\eqref{eq:CACE1} can be rewritten as
\begin{equation*}
2  p_A(0|0) p_B(0|0) + p_A(1|0)  p_B(1|1)  +  p_A(0|1) p_B(1|0) +  p_A( 1| 1) p_B(1|1) - 2
\end{equation*}
by utilizing Eq.~(2) in the main text.
Secondly, we can use two deterministic functions $F_A, F_B : \{0,1\} \to \{0, 1, \Phi\}$ to determine the probability distribution of $A$ and $B$, where $\Phi$ denotes an empty outcome. 
Namely, $ p_C( c|z) = 1$ if and only $F_C(z)=c$, where $C=A,B$. 
We now divide into nine cases: (1) $F_B(1)=1$ and $F_B(0)=0$; (2) $F_B(1)=0$ and $F_B(0) = 0$;  (3) $F_B(1) = 0$ and $F_B(0)=1$; (4) $F_B(1) =1$ and $F_B(0)=1$; (5) $F_B(1) = \Phi$ and $F_B(0)= \Phi$; (6) $F_B(1) = \Phi$ and $F_B(0)=0$; (7) $F_B(1) = \Phi$ and $F_B(0)=1$; (8) $F_B(1) =0$ and $F_B(0)= \Phi$; (9) $F_B(1) =1$ and $F_B(0)= \Phi$. 

{\it Case 1.}  $F_B(1)=1$ and $F_B(0)=0$. In this case, the RHS of Eq.~\eqref{eq:CACE1} is reduced to
\begin{equation}
2 p_A(0|0) +  p_A(1|0)  + p_A( 1| 1) - 2,
\end{equation}
and the left hand side (LHS) of Eq.~\eqref{eq:CACE1} is reduced to 1.
Clearly, $F_A(0)=0$ and $F_A(1)=1$ maximize the RHS of Eq.~\eqref{eq:CACE1}, which becomes
\begin{equation}
2 p_A(0|0) +  p_A(1|0)  + p_A( 1| 1) - 2 = 2+0+1-2 = 1.
\end{equation}
As the maximum of the RHS of Eq.~\eqref{eq:CACE1}  is no larger than the LHS of Eq.~\eqref{eq:CACE1}, Eq.~\eqref{eq:CACE1} always holds in Case 1.

{\it Case 2.}  $F_B(1)=0$ and $F_B(0) = 0$.  In this case, the RHS of Eq.~\eqref{eq:CACE1} is reduced to
\begin{equation}
2 p_A(0|0)  - 2 \le 2-2=0,
\end{equation}
and the LHS of Eq.~\eqref{eq:CACE1} is reduced to 0.
As the maximum of the RHS of Eq.~\eqref{eq:CACE1}  is no larger than the LHS of Eq.~\eqref{eq:CACE1}, Eq.~\eqref{eq:CACE1} always holds in Case 2.

{\it Case 3.}  $F_B(1)=0$ and $F_B(0) = 1$.  In this case, the RHS of Eq.~\eqref{eq:CACE1} is reduced to
\begin{equation}
 p_A(0|1)  - 2 \le 1-2 = -1,
\end{equation}
and the LHS of Eq.~\eqref{eq:CACE1} is reduced to $-1$.
As the maximum of the RHS of Eq.~\eqref{eq:CACE1}  is no larger than the LHS of Eq.~\eqref{eq:CACE1}, Eq.~\eqref{eq:CACE1} always holds in Case 3.

{\it Case 4.}  $F_B(1)=1$ and $F_B(0) = 1$.  In this case, the RHS of Eq.~\eqref{eq:CACE1} is reduced to
\begin{equation}
p_A(1|0) +  p_A(0|1) +  p_A( 1| 1) - 2,
\end{equation}
and the LHS of Eq.~\eqref{eq:CACE1} is reduced to 0.
Clearly, $F_A(0)=1$ and $F_A(1)=0$ maximize the RHS of Eq.~\eqref{eq:CACE1}, which becomes
\begin{equation}
p_A(1|0) +  p_A(0|1) +  p_A( 1| 1) - 2=1+1+0-2= 0.
\end{equation}
As the maximum of the RHS of Eq.~\eqref{eq:CACE1}  is no larger than the LHS of Eq.~\eqref{eq:CACE1}, Eq.~\eqref{eq:CACE1} always holds in Case 4.
 
{\it Case 5.}   $F_B(1) = \Phi$ and $F_B(0)= \Phi$. In this case, the RHS of Eq.~\eqref{eq:CACE1} is reduced to
$-2$,
and the LHS of Eq.~\eqref{eq:CACE1} is reduced to 0.
As the RHS of Eq.~\eqref{eq:CACE1}  is no larger than the LHS of Eq.~\eqref{eq:CACE1}, Eq.~\eqref{eq:CACE1} always holds in Case 5.

{\it Case 6.}  $F_B(1) = \Phi$ and $F_B(0)=0$. 
In this case, the RHS of Eq.~\eqref{eq:CACE1} is reduced to
\begin{equation}
2p(0,0|0)-2,
\end{equation}
and the LHS of Eq.~\eqref{eq:CACE1} is reduced to 0.
Clearly, $F_A(0)=0$ maximize the RHS of Eq.~\eqref{eq:CACE1}, which becomes
\begin{equation}
2p(0,0|0)-2=2-2= 0.
\end{equation}
As the maximum of the RHS of Eq.~\eqref{eq:CACE1}  is no larger than the LHS of Eq.~\eqref{eq:CACE1}, Eq.~\eqref{eq:CACE1} always holds in Case 6.

{\it Case 7.}  $F_B(1) = \Phi$ and $F_B(0)=1$. 
In this case, the RHS of Eq.~\eqref{eq:CACE1} is reduced to
\begin{equation}
p(0,1|1)-2,
\end{equation}
and the LHS of Eq.~\eqref{eq:CACE1} is reduced to $-1$.
Clearly, $F_A(1)=0$ maximize the RHS of Eq.~\eqref{eq:CACE1}, which becomes
\begin{equation}
p(0,1|1)-2=1-2= -1.
\end{equation}
As the maximum of the RHS of Eq.~\eqref{eq:CACE1}  is no larger than the LHS of Eq.~\eqref{eq:CACE1}, Eq.~\eqref{eq:CACE1} always holds in Case 7.

{\it Case 8.}  $F_B(1) =0$ and $F_B(0)= \Phi$. In this case, the RHS of Eq.~\eqref{eq:CACE1} is reduced to
$-2$,
and the LHS of Eq.~\eqref{eq:CACE1} is reduced to 0.
As the RHS of Eq.~\eqref{eq:CACE1}  is no larger than the LHS of Eq.~\eqref{eq:CACE1}, Eq.~\eqref{eq:CACE1} always holds in Case 8.

{\it Case 9.}  $F_B(1) =1$ and $F_B(0)= \Phi$. 
In this case, the RHS of Eq.~\eqref{eq:CACE1} is reduced to
\begin{equation}
p(1,1|0)+p(1,1|1)-2,
\end{equation}
and the LHS of Eq.~\eqref{eq:CACE1} is reduced to $1$.
Clearly, $F_A(1)=1$ and $F_A(0)=1$ maximize the RHS of Eq.~\eqref{eq:CACE1}, which becomes
\begin{equation}
p(1,1|0)+p(1,1|1)-2=1+1-2= 0.
\end{equation}
As the maximum of the RHS of Eq.~\eqref{eq:CACE1}  is no larger than the LHS of Eq.~\eqref{eq:CACE1}, Eq.~\eqref{eq:CACE1} always holds in Case 9.

Now let us move on to the second step. For the quantum scenario, let  CACE$^*$ denote the RHS of Eq.~(8) in the main text and QACE denote the LHS of Eq.~(8) in the main text.
Let us recall the setting in which the violation $\textrm{CACE}^* - \textrm{QACE}$ obtains the largest value $3-2\sqrt{2}$ \cite{PhysRevLett.125.230401}. According to Eq.~(4) in the main text, we only need to specify $\rho_{AB}$, $M^x_a$ and $N^a_b$.
The quantum state $\rho_{AB}$ is set as $\ket{\psi} = \cos(\alpha)  \ket{0, 0} + \sin(\alpha) \ket{1,1}$. 
Let the measurement settings on $A$ be
\begin{eqnarray}
M_0^x =  \frac{\mathbbm{1}}{2} +  \sin(\theta_x) \frac{\sigma_x}{2} + \cos(\theta_x) \frac{\sigma_z}{2} ,  \\
M_1^x  =   \frac{\mathbbm{1}}{2}- \sin(\theta_x) \frac{\sigma_x}{2} -  \cos(\theta_x) \frac{\sigma_z}{2}, \nonumber
\end{eqnarray}
where $x\in \{0,1\}$, and the measurement settings on $B$ be
\begin{eqnarray}
N_0^a =  \frac{\mathbbm{1}}{2} +   \sin(\phi_a) \frac{\sigma_x}{2} +  \cos(\phi_a) \frac{\sigma_z}{2} ,  \\
N_1^a  =  \frac{\mathbbm{1}}{2}-  \sin(\phi_a) \frac{\sigma_x}{2} -  \cos(\phi_a) \frac{\sigma_z}{2}, \nonumber
\end{eqnarray}
where $a\in \{0,1\}$.
Let  $\langle O \rangle$ be a shorthand for $\bra{\psi} O \ket{\psi}$. Then QACE can be expressed as $| \langle \mathbbm{1} \otimes (N^0_0 - N^1_0) \rangle  | $
and CACE$^*$ can be expressed as
\begin{equation}
2 \langle M_0^0 \otimes N_0^0 \rangle + \langle M_1^0 \otimes N_1^1 \rangle + \langle M_0^1 \otimes N_1^0 \rangle + \langle M_1 \otimes N_1^1 \rangle -2.
\end{equation}

Let $\phi_1 = -\phi_0$ and $\theta_1 = -\pi /2$. Then QACE is simplified to 0 and  CACE$^*$ is simplified to
\begin{eqnarray}
\frac{1}{4}\{5+ (\cos \theta_0 - \cos \phi_0) \cos(2\alpha) +3\cos(\theta_0)\cos(\phi_0)  +\sin(2\alpha)\sin(\phi_0)[2+\sin(\theta_0)]\} -2.
\end{eqnarray}
When $\alpha= 0.7165, \phi_0= 0.6750, \theta_0=0.2447$, we obtain 
$\textrm{CACE}^* - \textrm{QACE} = 3-2\sqrt{2} = 0.1716$.

In the quantum setting with detector efficiency $\eta$, we have
\begin{eqnarray}
p_\eta (a,b |x) &=& \eta^2 p (a,b|x),   \\
p_\eta (b | do(a) )  &=&  \eta p (b | do(a) ) , \nonumber
\end{eqnarray}
where $p$ is the probability with perfect quantum detection, and $p_\eta$ is the probability with  
detection efficiency $\eta$. 
With imperfect detection efficiency $\eta$, let CACE$^*_\eta$ denote the RHS of Eq.~(8) in the main text and QACE$_\eta$ denote the LHS of Eq.~(8) in the main text.
Then we have 
\begin{eqnarray}
 \textrm{QACE}_\eta &=&  \max\limits_{b,a,a'}(p_\eta(b| do(a))-p_\eta(b| do(a'))  \nonumber \\
  &=& \eta \max\limits_{b,a,a'}(p(b| do(a))-p(b| do(a'))  \\
&=&  \eta \times  \textrm{QACE},  \nonumber
\end{eqnarray}
  and 
\begin{eqnarray}
\textrm{CACE}^*_\eta  & =& 2 p_\eta(0,0 | 0 ) + p_\eta(1,1 |0) + p_\eta(0,1|1) + p_\eta( 1,1|1) -2 \nonumber \\
& =& \eta^2[ 2 p(0,0 | 0 ) + p(1,1 |0) + p(0,1|1) + p( 1,1|1)] -2  \\
& =& \eta^2(2 \langle M_0^0 \otimes N_0^0 \rangle + \langle M_1^0 \otimes N_1^1 \rangle + \langle M_0^1 \otimes N_1^0 \rangle + \langle M_1^1 \otimes N_1^1 \rangle) -2. \nonumber 
\end{eqnarray}
The violation condition $\textrm{CACE}_\eta^* - \textrm{QACE}_\eta>0$ leads to $ 2.1716 \eta^2 > 2$. Since $\eta$ is positive, this shows $\eta  >  95.97\%$, which proves the theorem.

\section{Proof of Theorem 2}
In the nosignaling setting, let $\textrm{CACE}^*$ denote the RHS of Eq.~(8) in the main text and $\textrm{NACE}$ the LHS of Eq.~(8) in the main text.
Similar to the quantum case, in the imperfection detection setting with detection efficiency $\eta$, we have
\begin{eqnarray}
p^{N}_\eta (a,b |x) &=&\eta^2 p^{N} (a,b|x),   \\
p^{N}_\eta (b | do(a) ) &=&\eta p^{N} (b | do(a) ), \nonumber
\end{eqnarray}
where $p^N$ denotes a no-signaling probability distribution with perfect detection efficiency and $p^N_\eta$ denotes a no-signaling probability distribution with detection efficiency $\eta$.

Let us set $p(0,0 | 0 ) = p(1,1 |0) = p(0,1|1) = p( 1,1|1) = 1/2 $.  We first show that this assignment satisfies the nosignaling constraint.
By mapping this probability distribution to the Bell setting according to
\begin{equation}
p( a, b|x )= p_{Bell}(a ,b |x, a),
\end{equation}
we have $p_{Bell}(0,0 | 0,0 ) = p_{Bell}(1,1 |0,1) = p_{Bell}(0,1|1,0) = p_{Bell}( 1,1|1,1) = 1/2 $. 
In addition, we let $p_{Bell}(1,1 | 0,0 ) = p_{Bell}(0,0 |0,1) = p_{Bell}(1,0|1,0) = p_{Bell}( 0,0|1,1) = 1/2 $.
All other values of $p_{Bell}(a,b|x,y)$ are set to 0.
Clearly $p_{Bell}(a|x,y)=p_{Bell}(b|x,y)=1/2$ for all $x,y,a,b$, hence the no-signaling condition holds.
An illustration of this probabilitiy distribution is shown in Fig.~\ref{fig:nosignal222}.

\begin{figure}[htb]
\centering \includegraphics[width=5cm]{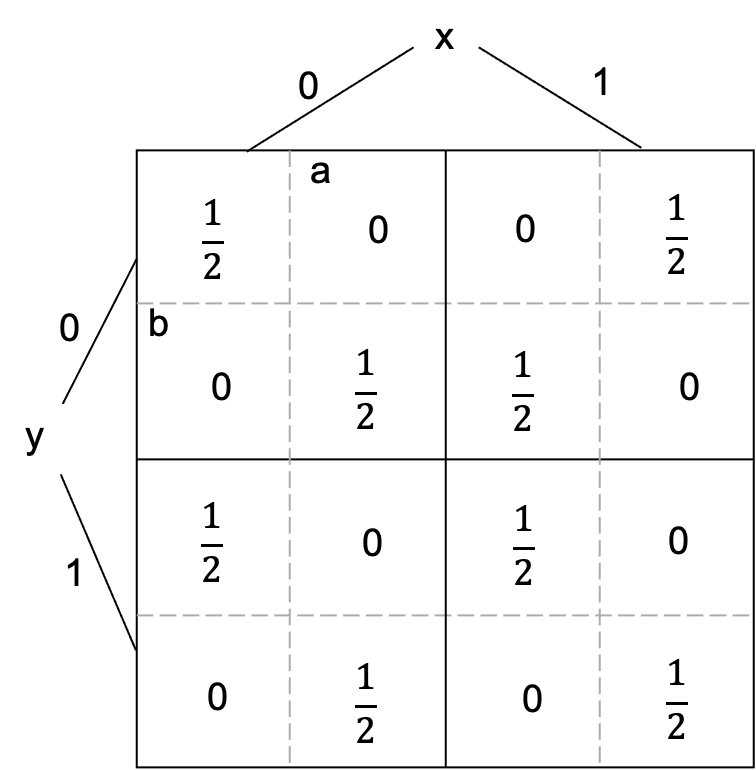}
\caption{Illustration of the no-signaling probability distribution which maximally violates the classical causal bound. } 
\label{fig:nosignal222}
\end{figure}

For this no-signaling correlation, we have 
\begin{equation}
\textrm{CACE}^* = 2\times \frac{1}{2} + \frac{1}{2} +  \frac{1}{2} +  \frac{1}{2} -2 =  \frac{1}{2}. 
\end{equation}
In addition, since $p(b|do(a))=1/2$ for all $a,b$, we have $\textrm{NACE}=0$. 
Therefore, this no-signaling correlation violates the classical bound by $\textrm{CACE}^*  - \textrm{NACE} = 0.5$. 

When detection efficiency is considered, we have $ \textrm{NACE}_\eta = \eta \times  \textrm{NACE} $, and 
\begin{equation*}
\textrm{CACE}^*_\eta  = \eta^2(2\times \frac{1}{2} + \frac{1}{2} +  \frac{1}{2} +  \frac{1}{2}) -2.
\end{equation*}
The violation condition $\textrm{CACE}_\eta^* - \textrm{NACE}_\eta$ leads to $ 2.5 \eta^2 \ge 2$. Since $\eta$ is positive, this shows $\eta  >  89.44\%$, completing the proof.

\section{Proof of Theorem 3}
We first show for classical correlations, we have 
\begin{equation}
\label{eq:CACE2}
p_B(1|1) - p_B(1|0) \ge I_{222}.
\end{equation}

Since the inequality to be proved is linear, we only need to consider deterministic strategies. We will exploit the relations
\begin{equation}
\sum\limits_{a,b=0}^1 p(a,b|x) = \eta^2, \quad \forall x.
\end{equation}
frequently.
We now divide into nine cases: (1) $F_B(1)=1$ and $F_B(0)=0$; (2) $F_B(1)=0$ and $F_B(0) = 0$;  (3) $F_B(1) = 0$ and $F_B(0)=1$; (4) $F_B(1) =1$ and $F_B(0)=1$; (5) $F_B(1) = \Phi$ and $F_B(0)= \Phi$; (6) $F_B(1) = \Phi$ and $F_B(0)=0$; (7) $F_B(1) = \Phi$ and $F_B(0)=1$; (8) $F_B(1) =0$ and $F_B(0)= \Phi$; (9) $F_B(1) =1$ and $F_B(0)= \Phi$. 

{\it Case 1.}  $F_B(1)=1$ and $F_B(0)=0$. In this case, the RHS of Eq.~\eqref{eq:CACE2} is reduced to
\begin{equation}
2 p(0,0 | 0 ) + p(1,1 |0)  + p( 1,1|1) - 1 -\eta^2,
\end{equation}
and the left hand side (LHS) of Eq.~\eqref{eq:CACE2} is reduced to 1.
Clearly, $F_A(0)=0$ and $F_A(1)=1$ maximize the RHS of Eq.~\eqref{eq:CACE2}, which becomes
\begin{equation}
2 p(0,0 | 0 ) + p( 1,1|1) - 1 -\eta^2 \le  p(0,0 | 0 ) \le 1.
\end{equation}
As the maximum of the RHS of Eq.~\eqref{eq:CACE2}  is no larger than the LHS of Eq.~\eqref{eq:CACE2}, Eq.~\eqref{eq:CACE2} always holds in Case 1.

{\it Case 2.}  $F_B(1)=0$ and $F_B(0) = 0$.  In this case, the RHS of Eq.~\eqref{eq:CACE2} is reduced to
\begin{equation}
2 p(0,0 | 0 ) - 1 -\eta^2,
\end{equation}
and the LHS of Eq.~\eqref{eq:CACE2} is reduced to 0.
As the maximum of the RHS of Eq.~\eqref{eq:CACE2}  is no larger than the LHS of Eq.~\eqref{eq:CACE2}, Eq.~\eqref{eq:CACE2} always holds in Case 2.

{\it Case 3.}  $F_B(1)=0$ and $F_B(0) = 1$.  In this case, the RHS of Eq.~\eqref{eq:CACE2} is reduced to
\begin{equation}
 p(0,1|1)  - 1 -\eta^2,
\end{equation}
and the LHS of Eq.~\eqref{eq:CACE2} is reduced to $-1$.
As the maximum of the RHS of Eq.~\eqref{eq:CACE2}  is no larger than the LHS of Eq.~\eqref{eq:CACE2}, Eq.~\eqref{eq:CACE2} always holds in Case 3.

{\it Case 4.}  $F_B(1)=1$ and $F_B(0) = 1$.  In this case, the RHS of Eq.~\eqref{eq:CACE2} is reduced to
\begin{eqnarray}
&& p(1,1 |0) + p(0,1|1) + p( 1,1|1) - 1 -\eta^2  \nonumber \\ 
&\le& p(1,1 |0) + [p(0,1|1) + p( 1,1|1)] - 1 -\eta^2  \\  
&\le& 1+\eta^2 -1-\eta^2=0, \nonumber
\end{eqnarray}
and the LHS of Eq.~\eqref{eq:CACE2} is reduced to 0.
As the maximum of the RHS of Eq.~\eqref{eq:CACE2}  is no larger than the LHS of Eq.~\eqref{eq:CACE2}, Eq.~\eqref{eq:CACE2} always holds in Case 4.
 
{\it Case 5.}   $F_B(1) = \Phi$ and $F_B(0)= \Phi$. In this case, the RHS of Eq.~\eqref{eq:CACE2} is reduced to
$- 1 -\eta^2$,
and the LHS of Eq.~\eqref{eq:CACE2} is reduced to 0.
As the RHS of Eq.~\eqref{eq:CACE2}  is no larger than the LHS of Eq.~\eqref{eq:CACE2}, Eq.~\eqref{eq:CACE2} always holds in Case 5.

{\it Case 6.}  $F_B(1) = \Phi$ and $F_B(0)=0$. 
In this case, the RHS of Eq.~\eqref{eq:CACE2} is reduced to
\begin{equation}
2 p(0,0 | 0 ) - 1 -\eta^2 \le 2\eta^2-1-\eta^2 \le 0
\end{equation}
and the LHS of Eq.~\eqref{eq:CACE2} is reduced to 0.
As the maximum of the RHS of Eq.~\eqref{eq:CACE2}  is no larger than the LHS of Eq.~\eqref{eq:CACE2}, Eq.~\eqref{eq:CACE2} always holds in Case 6.

{\it Case 7.}  $F_B(1) = \Phi$ and $F_B(0)=1$. 
In this case, the RHS of Eq.~\eqref{eq:CACE2} is reduced to
\begin{equation}
 p(0,1|1)- 1 -\eta^2 \le \eta^2-1-\eta^2 \le -1,
\end{equation}
and the LHS of Eq.~\eqref{eq:CACE2} is reduced to $-1$.
As the maximum of the RHS of Eq.~\eqref{eq:CACE2}  is no larger than the LHS of Eq.~\eqref{eq:CACE2}, Eq.~\eqref{eq:CACE2} always holds in Case 7.

{\it Case 8.}  $F_B(1) =0$ and $F_B(0)= \Phi$. In this case, the RHS of Eq.~\eqref{eq:CACE2} is reduced to
$- 1 -\eta^2$,
and the LHS of Eq.~\eqref{eq:CACE2} is reduced to 0.
As the RHS of Eq.~\eqref{eq:CACE2}  is no larger than the LHS of Eq.~\eqref{eq:CACE2}, Eq.~\eqref{eq:CACE2} always holds in Case 8.

{\it Case 9.}  $F_B(1) =1$ and $F_B(0)= \Phi$. 
In this case, the RHS of Eq.~\eqref{eq:CACE2} is reduced to
\begin{equation}
p(1,1 |0)  + p( 1,1|1) - 1 -\eta^2\le 2\eta^2-1-\eta^2 \le 0,
\end{equation}
and the LHS of Eq.~\eqref{eq:CACE2} is reduced to $1$.
As the maximum of the RHS of Eq.~\eqref{eq:CACE2}  is no larger than the LHS of Eq.~\eqref{eq:CACE2}, Eq.~\eqref{eq:CACE2} always holds in Case 9.

We use the same nosignaling distribution as the previous section. Namely, for perfect detection efficiency, we set $p(0,0 | 0 ) = p(1,1 |0) = p(0,1|1) = p( 1,1|1) = 1/2 $ and all other $p(a,b|x)$ be 0. Then $I_{222} = 1/2$ and $ACE= p_B(1|1) - p_B(1|0) = 0$. 

With detection efficiency $\eta$, we have  $\textrm{ACE} = \eta(p_B(1|1) - p_B(1|0))=0$, and 
\begin{equation*}
I_{222}  = \eta^2(2\times \frac{1}{2} + \frac{1}{2} +  \frac{1}{2} +  \frac{1}{2}-1) -1.
\end{equation*}
The violation condition $I_{222} - \textrm{ACE}\ge 0$ leads to $ 1.5 \eta^2 \ge 1$. Since $\eta$ is positive, this shows $\eta  >  81.65\%$, completing the proof.

\section{Proof of Theorem 4}

As in the previous section, we consider the following quantity, 
\begin{equation}
I_{222} = 2 p(0,0 | 0 ) + p(1,1 |0) + p(0,1|1) + p( 1,1|1) - 1 -\eta^2.
\end{equation}
From the previous section, we know that for classical correlations, we have 
\begin{equation}
p_B(1|1) - p_B(1|0) \ge I_{222}.
\end{equation}

We use the same quantum correlation as the one used in Sec.~\ref{AppSec:1}. 
The only difference is that here we have
\begin{equation}
\label{eq:CACEnew}
CACE^*=2 \langle M_0^0 \otimes N_0^0 \rangle + \langle M_1^0 \otimes N_1^1 \rangle + \langle M_0^1 \otimes N_1^0 \rangle + \langle M_1^1 \otimes N_1^1 \rangle - 1 -\eta^2.
\end{equation}
Substituting 
\begin{eqnarray}
\ket{\psi} &=& \cos(\alpha)  \ket{0, 0} + \sin(\alpha) \ket{1,1},  \nonumber \\
M_0^x &=&  (\mathbbm{1} +  \sin(\theta_x) \sigma_x + \cos(\theta_x) \sigma_z)/2,  \nonumber\\
M_1^x &=&   (\mathbbm{1}- \sin(\theta_x) \sigma_x-  \cos(\theta_x)\sigma_z)/2,  \nonumber\\
N_0^a &=&  (\mathbbm{1} +   \sin(\phi_a) \sigma_x +  \cos(\phi_a)\sigma_z)/2,  \nonumber \\
N_1^a  &=&  (\mathbbm{1} -  \sin(\phi_a) \sigma_x-  \cos(\phi_a) \sigma_z)/2,  \nonumber
\end{eqnarray}
into Eq.~\eqref{eq:CACEnew} with parameters $\alpha= 0.7165, \phi_0= 0.6750, \phi_1=-\phi_0,  \theta_0=0.2447,  \theta_1 = -\pi/2$,
we get $CACE^*=0.1716$. The expression of QACE is unchanged, so we still have $QACE=0$.
With imperfect detection efficiency $\eta$, we have
$CACE^*_\eta=1.1716\eta^2-1$ and $QACE_\eta=0$.
Hence, for any $\eta > 92.39\%$, the quantum correlation violates the classical causality bound, which completes the proof.

\section{Proof of Theorem 5}
We first show that, for any $\eta> 0$, there exists a no-signaling correlation that violates
\begin{equation}
\label{eq:nosignalMM}
p_B(1|1) - p_B(1|0) \ge I_{M22}.
\end{equation}
Consider the following Bell setting: 
\begin{eqnarray}
p_{Bell}(a,1-a|1,0)=1/2, &\quad \forall a=0,1; \nonumber \\
p_{Bell}(a,a|1,0)=0, &\quad  \forall a=0,1;\nonumber \\
p_{Bell}(a,1-a|x,y)=0, &\quad\forall (x,y)\not = (1,0), a=0,1;\nonumber \\
p_{Bell}(a,a|1,0)=1/2,& \quad \forall (x,y)\not = (1,0), a=0,1.\nonumber 
\end{eqnarray}
 An illustration of this no-signaling distribution for $M=3$ is shown in Fig.~\ref{fig:nosignalM22}.
 \begin{figure}[htb]
\centering \includegraphics[width=7cm]{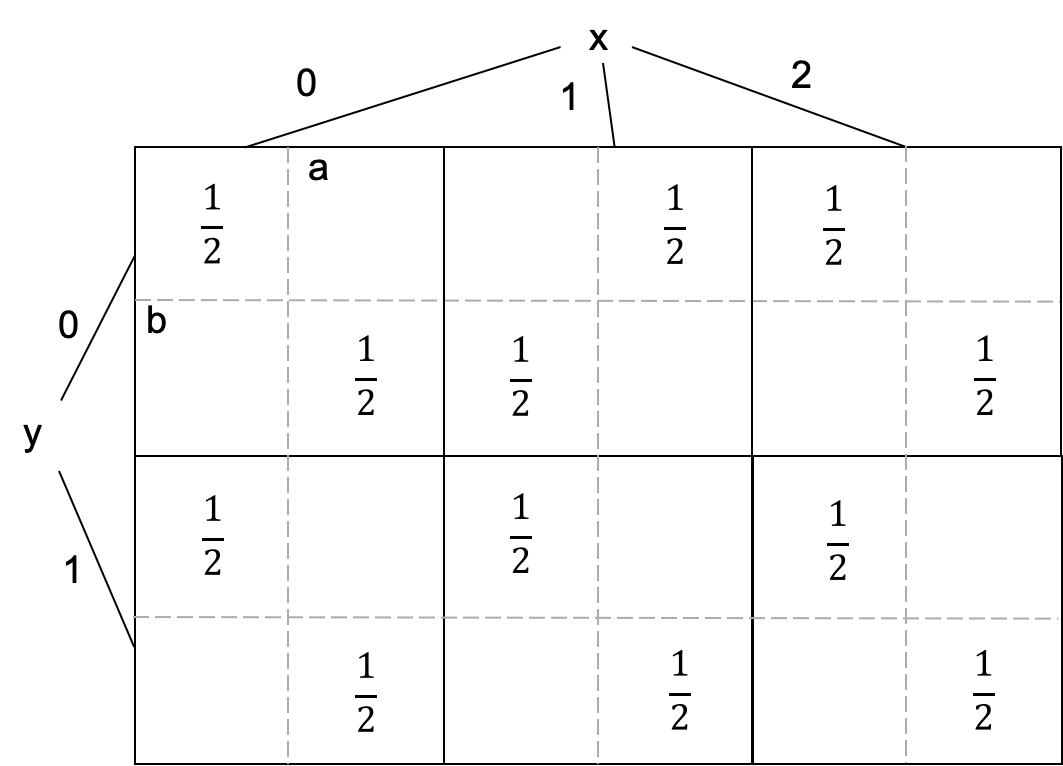}
\caption{Illustration of the no-signaling probability distribution for multiple measurement settings.} 
\label{fig:nosignalM22}
\end{figure} 
 
Clearly the no-signaling distribution satisfies 
\begin{equation}
p_{Bell} (a| x,y) = p_{Bell} (b| x,y) = 1/2,
\end{equation}
for all $x,y,a,b$. Hence, it obeys the no-signaling condition.
Let $p(a,b|x)= p_{Bell}(a,b|x,a)$ and in addition note that 
\begin{eqnarray}
p^{N}_\eta (a,b |x) &=& \eta^2 p^{N} (a,b|x),   \\
p^{N}_\eta (a | x )&=& \eta p^{N} (a | x) , \nonumber
\end{eqnarray}
where $p^N$ denotes a no-signaling probability distribution with perfect detection efficiency and $p^N_\eta$ denotes a no-signaling probability distribution with detection efficiency $\eta$.
Hence the RHS of Eq.~\eqref{eq:nosignalMM} can be written as
\begin{equation*}
\eta^2 ( \frac{M}{M-1} p(0,0| 0) + \frac{1}{M-1} p(0,1|1)  \\
 +\frac{1}{M-1} \sum\limits_{x=0}^{M-1} p(1,1|x) -\frac{1}{M-1}  ) - 1.
\end{equation*}
Since $p(b |do(a) ) =1/2$ for all $a,b$, we have that the LHS of  Eq.~\eqref{eq:nosignalMM} is 0.
Hence, Eq.~\eqref{eq:nosignalMM} is violated as long as 
\begin{equation}
\eta^2  \frac{2M-1}{2(M-1)} > 1,
 \end{equation}
which gives $\eta > \sqrt{(2M-2)/(2M-1)}$. Hence, we have proved that detection efficiency $ \sqrt{(2M-2)/(2M-1)}$
suffices for no-signaling theory to violate the inequality Eq.~\eqref{eq:nosignalMM}.

It remains to show that all classical correlations obey the inequality Eq.~\eqref{eq:nosignalMM}. 
Due to the linearity of Eq.~\eqref{eq:nosignalMM}, it suffices to consider deterministic strategies. For 
a deterministic strategy, there exist a function $F_A(x)=a$ and a function $F_B(a) =b$.

We now divide into nine cases: (1) $F_B(1)=1$ and $F_B(0)=0$; (2) $F_B(1)=0$ and $F_B(0) = 0$;  (3) $F_B(1) = 0$ and $F_B(0)=1$; (4) $F_B(1) =1$ and $F_B(0)=1$; (5) $F_B(1) = \Phi$ and $F_B(0)= \Phi$; (6) $F_B(1) = \Phi$ and $F_B(0)=0$; (7) $F_B(1) = \Phi$ and $F_B(0)=1$; (8) $F_B(1) =0$ and $F_B(0)= \Phi$; (9) $F_B(1) =1$ and $F_B(0)= \Phi$. 

{\it Case 1.}  $F_B(1)=1$ and $F_B(0)=0$. In this case, the RHS of Eq.~\eqref{eq:CACE2} is reduced to
\begin{equation}
 \frac{M}{M-1} p(0,0| 0)  + \frac{1}{M-1} \sum\limits_{x=0}^{M-1} p(1,1|x) -1 -\frac{\eta^2}{M-1} ,
\end{equation}
and the left hand side (LHS) of Eq.~\eqref{eq:CACE2} is reduced to 1.
Clearly, $F_A(0)=0$ and $F_A(1)=1$ maximize the RHS of Eq.~\eqref{eq:CACE2}, which becomes
\begin{equation}
 \frac{M}{M-1} p(0,0| 0)  + \frac{1}{M-1} \sum\limits_{x=1}^{M-1} p(1,1|x)- 1 -\frac{\eta^2}{M-1} \le  p(0,0 | 0 ) \le 1.
\end{equation}
As the maximum of the RHS of Eq.~\eqref{eq:CACE2}  is no larger than the LHS of Eq.~\eqref{eq:CACE2}, Eq.~\eqref{eq:CACE2} always holds in Case 1.

{\it Case 2.}  $F_B(1)=0$ and $F_B(0) = 0$.  In this case, the RHS of Eq.~\eqref{eq:CACE2} is reduced to
\begin{equation}
 \frac{M}{M-1} p(0,0| 0)  -1 -\frac{\eta^2}{M-1} \le 0 ,
\end{equation}
and the LHS of Eq.~\eqref{eq:CACE2} is reduced to 0.
As the maximum of the RHS of Eq.~\eqref{eq:CACE2}  is no larger than the LHS of Eq.~\eqref{eq:CACE2}, Eq.~\eqref{eq:CACE2} always holds in Case 2.

{\it Case 3.}  $F_B(1)=0$ and $F_B(0) = 1$.  In this case, the RHS of Eq.~\eqref{eq:CACE2} is reduced to
\begin{equation}
\frac{1}{M-1} p(0,1|1) -1 -\frac{\eta^2}{M-1} \le -1,
\end{equation}
and the LHS of Eq.~\eqref{eq:CACE2} is reduced to $-1$.
As the maximum of the RHS of Eq.~\eqref{eq:CACE2}  is no larger than the LHS of Eq.~\eqref{eq:CACE2}, Eq.~\eqref{eq:CACE2} always holds in Case 3.

{\it Case 4.}  $F_B(1)=1$ and $F_B(0) = 1$.  In this case, the RHS of Eq.~\eqref{eq:CACE2} is reduced to
\begin{eqnarray}
&& \frac{1}{M-1} p(0,1|1) + \frac{1}{M-1} \sum\limits_{x=0}^{M-1} p(1,1|x) -1 -\frac{\eta^2}{M-1}  \nonumber \\ 
&\le&  \frac{1}{M-1}p(1,1 |0) + \frac{1}{M-1} [p(0,1|1) + p( 1,1|1)]   + \frac{1}{M-1} \sum\limits_{x=2}^{M-1} p(1,1|x) - 1 -\frac{\eta^2}{M-1}  \\  
&\le&  \frac{M}{M-1}\eta^2  -\frac{\eta^2}{M-1}\le0, \nonumber
\end{eqnarray}
and the LHS of Eq.~\eqref{eq:CACE2} is reduced to 0.
As the maximum of the RHS of Eq.~\eqref{eq:CACE2}  is no larger than the LHS of Eq.~\eqref{eq:CACE2}, Eq.~\eqref{eq:CACE2} always holds in Case 4.
 
{\it Case 5.}   $F_B(1) = \Phi$ and $F_B(0)= \Phi$. In this case, the RHS of Eq.~\eqref{eq:CACE2} is reduced to
$- 1 -\eta^2/(M-1)$,
and the LHS of Eq.~\eqref{eq:CACE2} is reduced to 0.
As the RHS of Eq.~\eqref{eq:CACE2}  is no larger than the LHS of Eq.~\eqref{eq:CACE2}, Eq.~\eqref{eq:CACE2} always holds in Case 5.

{\it Case 6.}  $F_B(1) = \Phi$ and $F_B(0)=0$. 
In this case, the RHS of Eq.~\eqref{eq:CACE2} is reduced to
\begin{equation}
 \frac{M}{M-1} p(0,0| 0)  -1 -\frac{\eta^2}{M-1}\le 0 ,
\end{equation}
and the LHS of Eq.~\eqref{eq:CACE2} is reduced to 0.
As the maximum of the RHS of Eq.~\eqref{eq:CACE2}  is no larger than the LHS of Eq.~\eqref{eq:CACE2}, Eq.~\eqref{eq:CACE2} always holds in Case 6.

{\it Case 7.}  $F_B(1) = \Phi$ and $F_B(0)=1$. 
In this case, the RHS of Eq.~\eqref{eq:CACE2} is reduced to
\begin{equation}
 \frac{1}{M-1} p(0,1|1)  -1 -\frac{\eta^2}{M-1} \le -1 ,
\end{equation}
and the LHS of Eq.~\eqref{eq:CACE2} is reduced to $-1$.
As the maximum of the RHS of Eq.~\eqref{eq:CACE2}  is no larger than the LHS of Eq.~\eqref{eq:CACE2}, Eq.~\eqref{eq:CACE2} always holds in Case 7.

{\it Case 8.}  $F_B(1) =0$ and $F_B(0)= \Phi$. In this case, the RHS of Eq.~\eqref{eq:CACE2} is reduced to
$- 1 -\eta^2/(M-1)$,
and the LHS of Eq.~\eqref{eq:CACE2} is reduced to 0.
As the RHS of Eq.~\eqref{eq:CACE2}  is no larger than the LHS of Eq.~\eqref{eq:CACE2}, Eq.~\eqref{eq:CACE2} always holds in Case 8.

{\it Case 9.}  $F_B(1) =1$ and $F_B(0)= \Phi$. 
In this case, the RHS of Eq.~\eqref{eq:CACE2} is reduced to
\begin{equation}
 \frac{1}{M-1} \sum\limits_{x=0}^{M-1} p(1,1|x) -1 -\frac{\eta^2}{M-1} \le 0,
\end{equation}
and the LHS of Eq.~\eqref{eq:CACE2} is reduced to $1$.
As the maximum of the RHS of Eq.~\eqref{eq:CACE2}  is no larger than the LHS of Eq.~\eqref{eq:CACE2}, Eq.~\eqref{eq:CACE2} always holds in Case 9.

\section{Proof of Theorem 6}
First we show that the quantum causal bound Eq.~(12) in the main text is tight for quantum correlations.
Consider the following quantum correlation:
\begin{eqnarray}
p(1,0 |x) =0, &\quad \forall x=0,1, \nonumber \\
p(1,1 |x) =1/2,& \quad \forall x=0,1, \nonumber \\
p(0,1 |x) =\frac{1}{2}-p(0,0|x), &  \quad \forall x=0,1,  \\
p(0,0 |0) = \sqrt{2}-1, &\nonumber \\
p(0,0 |1) =0. & \nonumber 
\end{eqnarray}
An illustration of this probability distribution is shown in Fig.~\ref{fig:QBQC}.
 \begin{figure}[htb]
\centering \includegraphics[width=5cm]{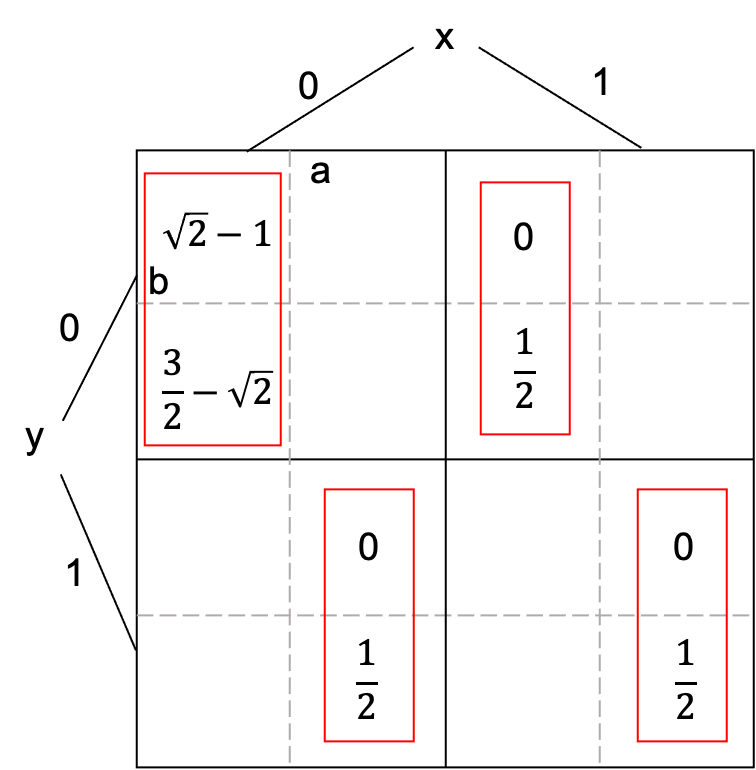}
\caption{Illustration of the quantum probabilistic distribution which maximally saturates the quantum causal bound.} 
\label{fig:QBQC}
\end{figure} 

Then we have
\begin{eqnarray}
\sum\limits_{x=0,1} (-1)^x[ p(0,0|x) -  p(0,1|x) ] &=& 2\sqrt{2} -2 , \nonumber \\
\sum\limits_{x=0,1} (-1)^x[ p(1,0|x) -  p(1,1|x) ] &=&0,
\end{eqnarray}
which leads to 
\begin{equation}
\xi = \sqrt{ [1-(2\sqrt{2}-2)]  (1-0)} = \sqrt{2} -1.
\end{equation}
Therefore the RHS of Eq.~(12) in the main text becomes 
\begin{equation}
(\sqrt{2}-1)+\frac{1}{2}+0+\frac{1}{2} - (\sqrt{2}-1)-1=0.
\end{equation}
This is the borderline that the LHS of Eq.~(12) in the main text equals 0 according to Fig.~2 of Ref.~\cite{PhysRevLett.125.230401}. Hence, the inequality Eq.~(12) in the main text attains equality, which shows that
Eq.~(12) in the main text is a tight causal bound for quantum correlations. 

Next, we consider the no-signaling correlation that has been illustrated in Fig.~\ref{fig:nosignal222}, namely
\begin{eqnarray}
p(a,1-a |x) =1/2, &\quad (x,y)=(1,0), \forall a=0,1 \nonumber \\
p(a,a |x) =1/2, &\quad  \forall (x,y)\not=(1,0), a=0,1,  \\
p(a,b |x) =0,  &\quad \textrm{otherwise}. \nonumber
\end{eqnarray}
With this correlation, we have
\begin{eqnarray}
\sum\limits_{x=0,1} (-1)^x[ p(0,0|x) -  p(0,1|x) ] &=& 1 , \nonumber \\
\sum\limits_{x=0,1} (-1)^x[ p(1,0|x) -  p(1,1|x) ] &=&0,
\end{eqnarray}
which leads to 
\begin{equation}
\xi = \sqrt{ [1-1]  (1-0)} =0.
\end{equation}
Therefore the RHS of Eq.~(12) in the main text, denoted by QACE$^*$ becomes 
\begin{equation}
\frac{1}{2}+\frac{1}{2}+0+\frac{1}{2} -0-1=\frac{1}{2}.
\end{equation}
The LHS of Eq.~(12) in the main text, denoted by NACE, is 0.
The no-signaling correlation then achieves the maximum violation of the quantum causal bound by 
\begin{equation}
\textrm{QACE}^* - \textrm{NACE} = \frac{1}{2} -0 =\frac{1}{2}.
\end{equation}

With detection efficiency $\eta$, let $ \textrm{NACE}_\eta$ denote the LHS of Eq.~(12) in the main text
and  $\textrm{QACE}^*_\eta$ denote the RHS of Eq.~(12) in the main text. Then we have
\begin{equation}
\textrm{NACE}_\eta = \eta \textrm{NACE} = 0.
\end{equation}
With detection efficiency $\eta$, $\xi$ becomes 
\begin{equation}
\xi_\eta = \sqrt{ [1-\eta^2]  (1-0)} =\sqrt{1-\eta^2},
\end{equation}
and hence
\begin{eqnarray}
\textrm{QACE}^*_\eta &=& \eta^2(\frac{1}{2}+\frac{1}{2}+0+\frac{1}{2})  -\xi_\eta -1  \nonumber \\
&=&\frac{3}{2} \eta^2 -\sqrt{1-\eta^2} -1. 
\end{eqnarray}
A no-signaling violation of the quantum causal bound means that
\begin{eqnarray}
0 &<& \textrm{QACE}^*_\eta-  \textrm{NACE}_\eta   \\
   &=& \frac{3}{2} \eta^2 -\sqrt{1-\eta^2} -1,        \nonumber
\end{eqnarray}
which gives $\eta > 94.29\%$. This completes the proof.

\section{Proof of Theorem 7}

First, we prove the following lemma.
\begin{lemma}
\label{lemma:max}
\begin{equation}
\max\limits_\alpha(- \alpha x'  - \beta y' - |  \alpha + \beta |) =\frac{1}{2}(x'-y') -\min\limits_\pm \sqrt{(1\pm x')(1\pm y')}|.
\end{equation}
\end{lemma}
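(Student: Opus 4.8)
The plan is to treat the right-hand side as a $\max$ over a discrete sign and the left-hand side as a concave maximization with a single non-smooth point, and then match them branch by branch. First I would rewrite the target value using $-\min_\pm(\cdot)=\max_\pm(-\cdot)$, so that the claimed identity becomes $\max_\alpha(-\alpha x'-\beta y'-|\alpha+\beta|)=\max_\pm\bigl[\tfrac12(x'-y')-\sqrt{(1\pm x')(1\pm y')}\bigr]$. This reformulation is the organizing idea: it signals that the single maximization on the left must split into exactly two cases, that each case should reproduce one of the two signed square roots, and that the outer $\max_\pm$ (equivalently the $-\min_\pm$) is just the assertion that the global optimum is the better of the two cases.

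Next I would exploit the only source of non-smoothness, the kink of $|\alpha+\beta|$ at $\alpha+\beta=0$. The objective is a sum of linear terms minus the convex function $|\alpha+\beta|$, hence concave in the measurement parameters $(\alpha,\beta)$, so its maximum is attained either at the kink or at a stationary point of one of the two affine pieces on the admissible region (the normalization/Bloch-ball constraints carried over from the underlying quantum correlation). On the branch $\alpha+\beta\ge 0$ the objective reduces to $-\alpha(1+x')-\beta(1+y')$, and on the branch $\alpha+\beta\le 0$ it reduces to $\alpha(1-x')+\beta(1-y')$. I would optimize each affine piece over the admissible region; saturating the relevant Cauchy--Schwarz/AM--GM inequality is precisely what converts the linear objective into the geometric mean, producing $\sqrt{(1+x')(1+y')}$ on the first branch and $\sqrt{(1-x')(1-y')}$ on the second, while the residual balanced contribution assembles into the linear term $\tfrac12(x'-y')$.

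Finally I would combine the two branch optima by taking the larger one, which by the reformulation above is exactly $\tfrac12(x'-y')-\min_\pm\sqrt{(1\pm x')(1\pm y')}$, and then verify internal consistency: that the maximizer found in each branch actually obeys the sign constraint defining that branch, so no spurious case is introduced. The main obstacle I anticipate is precisely this bookkeeping at the non-smooth point together with the constraint boundary; I must confirm that the claimed optimum is genuinely global (ruling out that a boundary of the admissible region beats the interior/kink candidate) and that the two branches are glued correctly, since a mistake in which branch is active for a given $(x',y')$ would flip the sign inside the $\min_\pm$ and destroy the identity. By contrast, the algebra turning each affine branch into a single square root via equality in AM--GM is routine once the correct stationarity and constraint conditions are written down.
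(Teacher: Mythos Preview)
Your high-level plan---split on the sign of $\alpha+\beta$, optimize each affine branch, and combine via $\max_\pm$---coincides with the paper's proof. The paper does exactly this: in Case~1 ($\alpha+\beta\ge 0$) it bounds the branch by $\tfrac12(x'-y')-\sqrt{(1+x')(1+y')}$, in Case~2 ($\alpha+\beta<0$) by $\tfrac12(x'-y')-\sqrt{(1-x')(1-y')}$, and then takes the better of the two.

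However, there is a concrete gap in your proposal at the step where you ``optimize each affine piece over the admissible region'' and appeal to ``normalization/Bloch-ball constraints.'' That is not the constraint in play here, and with a ball-type constraint the branch optimum would be an $\ell^2$ quantity like $\sqrt{(1\pm x')^2+(1\pm y')^2}$, not the geometric mean $\sqrt{(1\pm x')(1\pm y')}$. The lemma is a one-variable maximization in $\alpha$ only; $\beta$ is not free but is tied to $\alpha$ through the relation $\beta=(1+\alpha)/(1+2\alpha)$, equivalently $(2\alpha+1)(2\beta-1)=1$, which the paper imposes just before invoking the lemma. This hyperbolic constraint is precisely what makes AM--GM produce a geometric mean: writing $t=2\alpha+1$ one gets, on the $+$ branch,
\[
-\alpha(1+x')-\beta(1+y')=\tfrac12(x'-y')-\Bigl[\tfrac{t}{2}(1+x')+\tfrac{1}{2t}(1+y')\Bigr]\le \tfrac12(x'-y')-\sqrt{(1+x')(1+y')},
\]
with equality at $t=\sqrt{(1+y')/(1+x')}$, and similarly on the $-$ branch. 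Without this specific relation your AM--GM/Cauchy--Schwarz step has nothing to bite on, and the claimed square-root form does not emerge. Once you replace the speculative Bloch-ball constraint by $\beta=(1+\alpha)/(1+2\alpha)$, your outline becomes exactly the paper's argument; the sign-consistency bookkeeping you flag (checking that the optimizer in each case respects the branch's sign condition) is indeed the only remaining detail, and it is routine.
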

\begin{proof}
Consider two cases: (1) $ |\alpha + \beta | \ge 0$; (2) $ |  \alpha + \beta |  < 0$.

{\it Case 1:}  Combined with the relation
\begin{equation}
\beta = \frac{1+ \alpha}{ 1+ 2\alpha}, 
\end{equation}
then the quantity to maximize in the LHS of Eq.~\eqref{eq:max} becomes
\begin{eqnarray}
&&  - \alpha x'  - \beta y' -   \alpha  - \beta  \nonumber   \\
&=&   - \alpha (x'+1)  -  \frac{1+ \alpha}{ 1+ 2\alpha} (y'+1)    \\
&=&  -  \frac{1}{  2}(2\alpha+1) (x'+1)  + \frac{1}{  2}  (x'+1)  -  \frac{1}{  2}  (y'+1)  -  \frac{1}{2( 1+ 2\alpha)}    (y'+1)  \nonumber  \\
&\le &  \frac{1}{  2}  (x'-y') -\sqrt{(1+x')(1+y')}. \nonumber
\end{eqnarray}

{\it Case 2:}  The quantity to maximize in the LHS of Eq.~\eqref{eq:max} becomes
\begin{eqnarray}
&&  - \alpha x'  - \beta y' +   \alpha  + \beta  \nonumber   \\
&=&   - \alpha (x'-1)  -  \frac{1+ \alpha}{ 1+ 2\alpha} (y'-1)    \\
&=&  -  \frac{1}{  2}(2\alpha+1) (x'-1)  + \frac{1}{  2}  (x'-1)  -  \frac{1}{  2}  (y'-1)  -  \frac{1}{2( 1+ 2\alpha)}   (y'-1)  \nonumber   \\
&\le &  \frac{1}{  2}  (x'-y') -\sqrt{(1-x')(1-y')}.  \nonumber
\end{eqnarray}

Combining Case 1 and Case 2, the lemma is proved.
\end{proof}

Now we are ready to prove the theorem.
First, we show that quantum correlations obey Eq.~(13) in the main text.
 With detection efficiency $\eta$, we have the relations $\sum_{a,b=0}^1 p(a,b|x) = \eta^2, \forall x$.
 Let 
\begin{equation}
B(\alpha, \beta, \gamma, \delta) =  -\alpha  \langle M^0 \otimes N^0 \rangle + \beta \langle M^0 \otimes N^1 \rangle + \gamma \langle M^1 \otimes N^0 \rangle + \delta \langle M^1 \otimes N^1 \rangle,
\end{equation}
where $M^x = M^x_0 - M^x_1$,  $N^y = N^y_0 - N^y_1$, and $ \langle M^x \otimes N^y \rangle = \textrm{tr} [ (M^x \otimes N^y ) \rho_{AB} ]  $.
By the relations
$\langle M^x_a \otimes N^a_b\rangle = p(a,b|x)$ and $\langle \mathbbm{1} \otimes N^a_b\rangle = p(b| do(a))$, 
we have
\begin{eqnarray}
B(\alpha, \beta, \gamma, \delta) &=& -\eta^2(\alpha - \beta +\gamma - \delta) + 2\eta p(0| do(0)) (\alpha -\gamma) + 2 \eta p( 0|do(1) ) (\beta + \delta) \\
 &&- 2\alpha(p(0,0|0) - p(0,1|0)) + 2\beta( p(1,1|0) - p(1,0|0) )  \nonumber  \\
& &+ 2\gamma (p(0,0|1)- p(0,1|1))  + 2\delta( p(1,1|1) - p(1,0|1) ).   \nonumber 
\end{eqnarray}
 Let $\gamma = 1 + \alpha$ and $\delta = 1 - \beta$. After reorganizing the terms, we have 
\begin{eqnarray}
\label{eq:QACE2inter}
\eta QACE &\ge& \eta [p(0|do(0)) - p(0|do(1))]   \\
&= & p(0,0|1) + p(1,1|1) - p(0,1|1)  - p(1,0|1)   - \alpha \sum\limits_{x=0,1} (-1)^x ( p(0,0|x)  - p(0,1|x) )  \nonumber  \\
 && - \beta \sum\limits_{x=0,1} (-1)^x ( p(1,0|x)  - p(1,1|x) ) - B(\alpha, \beta, \gamma, \delta).\nonumber
\end{eqnarray}

Equality~\eqref{eq:QACE2inter} holds for any $\alpha$ and $\beta$. We next  select appropiate $\alpha$ and $\beta$ such that the RHS of Eq.~\eqref{eq:QACE2inter} is maximized. Let us set $\beta = (1+ \alpha)/(1+2\alpha)$,  and we have
$B(\alpha, \beta, \gamma, \delta) = | \alpha + \beta |$. In addition, 
\begin{eqnarray}
\label{eq:max}
&& \max\limits_\alpha [- \alpha \sum\limits_{x=0,1} (-1)^x ( p(0,0|x)  - p(0,1|x) ) - \beta \sum\limits_{x=0,1} (-1)^x ( p(1,0|x)    - p(1,1|x) ) - |  \alpha + \beta |] \nonumber  \\
&=& \frac{1}{2} [  \sum\limits_{x=0,1} (-1)^x ( p(0,0|x)  - p(0,1|x) ) - \sum\limits_{x=0,1} (-1)^x ( p(1,0|x)   - p(1,1|x) )  ] - \xi,  
 \end{eqnarray}
 where
 \begin{equation}
\xi = \min\limits_{\pm} \sqrt{  \prod\limits_{a=0,1} \{ 1\pm \sum\limits_{x=0,1} (-1)^x[ p(a,0|x) -  p(a,1|x) ]  \}   } .
 \end{equation}
This equation is obtained by substituting 
\begin{eqnarray}
x' = \sum\limits_{x=0,1} (-1)^x ( p(0,0|x)  - p(0,1|x) ),  \\
y'=  \sum\limits_{x=0,1} (-1)^x ( p(1,0|x)  - p(1,1|x) ).
\end{eqnarray}
into Lemma~\ref{lemma:max}.

Note the following identity
 \begin{eqnarray}
 \label{eq:probmani}
&& p(0,0|1) +p(1,1|1) -p(0,1|1) - p(1,0|1)  +\frac{1}{2} [  \sum\limits_{x=0,1} (-1)^x ( p(0,0|x)  - p(0,1|x) )  - \sum\limits_{x=0,1} (-1)^x ( p(1,0|x)  - p(1,1|x) )  ]   \nonumber\\
&=& p(0,0|1) +p(1,1|1) -p(0,1|1) - p(1,0|1)  +\frac{1}{2} [ p(0,0|0) - p(0,1|0) \nonumber  \\
&&\quad- p(0,0|1) + p(0,1|1) -p(1,0|0)+p(1,1|0) + p(1,0|1) - p(1,1|1)  ]   \\
&=& -p(0,1|1) - p(1,0|1)  +\frac{1}{2} [ p(0,0|0) - p(0,1|0) + p(0,0|1) + p(0,1|1) -p(1,0|0)+p(1,1|0) + p(1,0|1) + p(1,1|1)  ] \nonumber \\
&=& -p(0,1|1) - p(1,0|1)  + \frac{1}{2} [ p(0,0|0) - p(0,1|0)-p(1,0|0)+p(1,1|0)  ]  + \frac{1}{2} [ p(1,0|1) + p(1,1|1) + p(0,0|1) + p(0,1|1) ] \nonumber  \\
&=& -p(0,1|1) - p(1,0|1)  + \frac{1}{2} [ 2p(0,0|0) +2p(1,1|0) -\eta^2  ] + \frac{1}{2} [ \eta^2 ] \nonumber \\
&=& p( 0,0|0 ) + p( 1,1|0) -p(0,1|1) - p(1,0|1).  \nonumber
\end{eqnarray}
 Combining Eqs.~\eqref{eq:QACE2inter}, \eqref{eq:max}, and \eqref{eq:probmani}, we have
\begin{equation}
\eta QACE \ge p( 0,0|0 ) + p( 1,1|0) -p(0,1|1) - p(1,0|1) - \xi,
\end{equation}
which shows that Eq.~(13) in the main text holds.

Next, we consider the no-signaling correlation that has been illustrated in Fig.~\ref{fig:nosignal222}, namely
\begin{eqnarray}
p(a,1-a |x) =1/2, &\quad (x,y)=(1,0), \forall a=0,1 \nonumber \\
p(a,a |x) =1/2, &\quad  \forall (x,y)\not=(1,0), a=0,1,  \\
p(a,b |x) =0,  &\quad \textrm{otherwise}. \nonumber
\end{eqnarray}
With this correlation, we have
\begin{eqnarray}
\sum\limits_{x=0,1} (-1)^x[ p(0,0|x) -  p(0,1|x) ] = 1 , \nonumber \\
\sum\limits_{x=0,1} (-1)^x[ p(1,0|x) -  p(1,1|x) ] =0,
\end{eqnarray}
which leads to 
\begin{equation}
\xi = \sqrt{ [1-1]  (1-0)} =0.
\end{equation}
Therefore the RHS of Eq.~(13) in the main text, denoted by QACE$^*$ becomes 
\begin{equation}
\frac{1}{2}+\frac{1}{2}-\frac{1}{2}-0 -0=\frac{1}{2}.
\end{equation}
The LHS of Eq.~(13) in the main text, denoted by NACE, is 0.
The no-signaling correlation then achieves the maximum violation of the quantum causal bound by 
\begin{equation}
\textrm{QACE}^* - \textrm{NACE} = \frac{1}{2} -0 =\frac{1}{2}.
\end{equation}

With detection efficiency $\eta$, let $ \textrm{NACE}_\eta$ denote the LHS of Eq.~(13) in the main text
and  $\textrm{QACE}^*_\eta$ denote the RHS of Eq.~(13) in the main text. Then we have
\begin{equation}
\textrm{NACE}_\eta = \eta \textrm{NACE} = 0.
\end{equation}
With detection efficiency $\eta$, $\xi$ becomes 
\begin{equation}
\xi_\eta = \sqrt{ (1-\eta^2)  (1-0)} =\sqrt{1-\eta^2},
\end{equation}
and hence
\begin{eqnarray}
\textrm{QACE}^*_\eta &=&\frac{1}{\eta} [\eta^2(\frac{1}{2}+\frac{1}{2}-\frac{1}{2}-0)  -\xi_\eta ]  \nonumber \\
&=& \frac{1}{\eta}(\frac{1}{2} \eta^2 -\sqrt{1-\eta^2}). 
\end{eqnarray}
A no-signaling violation of the quantum causal bound implies
\begin{eqnarray}
0 &<& \textrm{QACE}^*_\eta-  \textrm{NACE}_\eta   \\
   &=&  \frac{1}{\eta}(\frac{1}{2} \eta^2 -\sqrt{1-\eta^2}),        \nonumber
\end{eqnarray}
which gives $\eta > 91.02\%$. This completes the proof.

\end{document}